\def\BibTeX{{\rm B\kern-.05em{\sc i\kern-.025em b}\kern-.08em
    T\kern-.1667em\lower.7ex\hbox{E}\kern-.125emX}}
\newtheorem{definition}{Definition}
\newtheorem{theorem}{Theorem}
\newtheorem{proposition}{Proposition}
\newtheorem{corollary}{Corollary}
\newtheorem{lemma}{Lemma}
\newtheorem{example}{Example}
\newcommand{\rev}[1]{{\color{blue}#1}} 
\newcommand{\revr}[1]{{\color{blue}#1}} 
\newcommand{\com}[1]{{\color{red}\textbf{Parinaz's Comment}: #1}}
\newcommand{\comr}[1]{{\color{orange}\textbf{Raman's Comment}: #1}}
\newcommand{\resp}[1]{{\color{cyan}\textbf{Response}: #1}} 
\newcommand{\rev}[1]{#1}
\newcommand{\revr}[1]{#1}
\newcommand{\com}[1]{}
\newcommand{\comr}[1]{}
\newcommand{\resp}[1]{}
\begin{document}

\title{United We Fall: On the Nash Equilibria of Multiplex and Multilayer Network Games}

\author{Raman Ebrahimi and Parinaz Naghizadeh
\thanks{This work is supported by the NSF under award CCF-2144283.}
\thanks{R. Ebrahimi and P. Naghizadeh are with the Electrical and Computer Engineering Department, University of California, San Diego. e-mail: \{raman, parinaz\}@ucsd.edu}
}
\maketitle

\begin{abstract}
    Network games provide a framework to study strategic decision making processes that are governed by structured interdependencies among agents. However, existing models do not account for environments in which agents simultaneously interact over multiple networks, or when agents operate over multiple action dimensions. In this paper, we propose new models of \emph{multiplex} network games to capture the different modalities of interactions among strategic agents, and \emph{multilayer} network games to capture their interactions over multiple action dimensions. We explore how the properties of the constituent networks of a multiplex/multilayer network can undermine or support the existence, uniqueness, and stability of the game's Nash equilibria. Notably, we highlight that both the largest and smallest eigenvalues of the constituent networks (reflecting their connectivity and two-sidedness, respectively) are instrumental in determining the uniqueness of the multiplex/multilayer network game's equilibrium. Together, our findings shed light on the reasons for the fragility of equilibria when agents interact over networks of networks, and point out potential interventions to alleviate them.
\end{abstract}

\section{Introduction}\label{sec:intro}

Networks provide a powerful framework for understanding and analyzing real-world environments in which agents interact with and influence one another. In particular, when participating agents are rational and self-interested, the interactions among them can be modeled as a \emph{network game} \cite{jackson2015games}. Such networked strategic interactions emerge in the local provision of public goods~\cite{bramoulle2014strategic,allouch2015private,naghizadeh2017provision} (such as cyber-security, R\&D), spread of shocks in financial markets~\cite{acemoglu2015systemic}, and pricing in the presence of social effects and externalities~\cite{candogan2012optimal}.  

Although existing works in this area capture a network of interactions {between strategic agents}, they fail to capture \emph{the various networks} and \emph{the various action dimensions} over which the agents can interact. For instance, individuals are influenced by information received over multiple social networks, as well as face-to-face interactions, when making decisions. Similarly, firms in a market can cooperate and compete with each other along different modes of business (e.g., physical storefronts and online shops) and across product categories. These scenarios call for more complex models of ``networks of networks'' that can capture the different modalities and dimensions of agents' interactions. 

Specifically, a \emph{multiplex network} model can be used to simultaneously account for the multiple networks of interactions among the agents, and a \emph{multilayer network} can help us account for the different action dimensions (see Figure~\ref{fig:network-illustrations} for an illustration, and Section~\ref{sec:illustrative-example} for a detailed example in the context of interdependent security games). While there is an existing literature on using the formalism of multilayer/multiplex network models, it has primarily focused on questions about percolation and spread of dynamical processes on these networks; the study of games 
on this type of networks remains largely unexplored, with game-theoretical modeling and analysis often identified as an open area of research by surveys of the field~\cite{boccaletti2014structure,aleta2019multilayer,kivela2014multilayer,salehi2015spreading}. Motivated by this, in this paper, we extend the existing models of \emph{single-layer} network games, by proposing \emph{multiplex} and \emph{multilayer} network games. 

\begin{figure}[t]
    \centering
    \begin{subfigure}[b]{0.48\columnwidth}
    \begin{tikzpicture}[scale=0.5, transform shape]

\fill[blue!10, opacity=0.5] (0.25,-0.75) -- (7,-0.75) -- (6.5,0.75) -- (-0.25,0.75) -- cycle;

\fill[blue!10, opacity=0.5] (0.25,-2.75) -- (7,-2.75) -- (6.5,-1.25) -- (-0.25,-1.25) -- cycle;

\node[circle, draw, fill=blue!30, inner sep=2pt] (A1) at (1,0) {A};
\node[circle, draw, fill=green!30, inner sep=2pt] (B1) at (2.5,0.4) {B};
\node[circle, draw, fill=red!30, inner sep=2pt] (C1) at (4,-0.4) {C};
\node[circle, draw, fill=yellow!30, inner sep=2pt] (D1) at (5.5,0.2) {D};

\node[circle, draw, fill=blue!30, inner sep=2pt] (A2) at (1,-2) {A};
\node[circle, draw, fill=green!30, inner sep=2pt] (B2) at (2.5,-1.7) {B};
\node[circle, draw, fill=red!30, inner sep=2pt] (C2) at (4,-2.4) {C};
\node[circle, draw, fill=yellow!30, inner sep=2pt] (D2) at (6,-2) {D};

\draw[dashed] (A1) -- (B1);
\draw[dashed] (B1) -- (C1);
\draw[dashed] (C1) -- (D1);
\draw[dashed] (A1) -- (C1);
\draw[dashed] (A1) -- (D1);

\draw[dashed] (A2) -- (B2);
\draw[dashed] (B2) -- (C2);
\draw[dashed] (C2) -- (D2);
\draw[dashed] (A2) -- (C2);
\draw[dashed] (B2) -- (D2);

\draw (A1) -- (A2);
\draw (B1) -- (B2);
\draw (C1) -- (C2);
\draw (D1) -- (D2);

\end{tikzpicture}
    \caption{A \textbf{multiplex} network, representing different types of interaction. Nodes are connected to themselves in other layers.}
    \label{fig:multiplex_network}
    \end{subfigure}
    \hspace{0.01in}
    \begin{subfigure}[b]{0.48\columnwidth}
        \begin{tikzpicture}[scale=0.5, transform shape]

\fill[blue!10, opacity=0.5] (0.25,-0.75) -- (7,-0.75) -- (6.5,0.75) -- (-0.25,0.75) -- cycle;

\fill[blue!10, opacity=0.5] (0.25,-2.75) -- (7,-2.75) -- (6.5,-1.25) -- (-0.25,-1.25) -- cycle;

\node[circle, draw, fill=blue!30, inner sep=2pt] (A1) at (1,0) {A};
\node[circle, draw, fill=green!30, inner sep=2pt] (B1) at (2.5,0.4) {B};
\node[circle, draw, fill=red!30, inner sep=2pt] (C1) at (4,-0.4) {C};
\node[circle, draw, fill=yellow!30, inner sep=2pt] (D1) at (5.5,0.2) {D};

\node[circle, draw, fill=blue!30, inner sep=2pt] (A2) at (1,-2) {A};
\node[circle, draw, fill=green!30, inner sep=2pt] (B2) at (2.5,-1.7) {B};
\node[circle, draw, fill=red!30, inner sep=2pt] (C2) at (4,-2.4) {C};
\node[circle, draw, fill=yellow!30, inner sep=2pt] (D2) at (6,-2) {D};

\draw[dashed] (A1) -- (B1);
\draw[dashed] (B1) -- (C1);
\draw[dashed] (C1) -- (D1);
\draw[dashed] (A1) -- (C1);
\draw[dashed] (A1) -- (D1);

\draw[dashed] (A2) -- (B2);
\draw[dashed] (B2) -- (C2);
\draw[dashed] (C2) -- (D2);
\draw[dashed] (A2) -- (C2);
\draw[dashed] (B2) -- (D2);

\draw (B1) -- (A2);
\draw (B1) -- (C2);
\draw (C1) -- (B2);
\draw (C1) -- (D2);

\end{tikzpicture}
    \caption{A \textbf{multilayer} network, representing different action dimensions. Nodes can connect to any other node in other layers.}
    \label{fig:multilayer_network}
    \end{subfigure}
    \caption{Illustration of multiplex and multilayer networks.}
    \vspace{-0.16in}
    \label{fig:network-illustrations}
\end{figure}
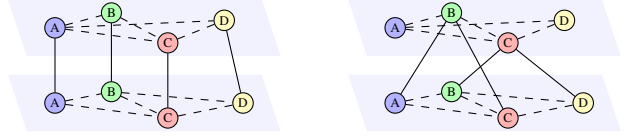

A primary direction of research on single-layer network games has been to analyze how the structural properties of the network of interactions among the agents influences the equilibrium outcomes. In particular, existing works have identified necessary and sufficient conditions for the existence, uniqueness, and/or stability of Nash equilibria (NE) of single-layer network games~\cite{bramoulle2014strategic,naghizadeh2015provision,naghizadeh2017provision,miura2008security,ballester2010interactions,allouch2019constrained,cai2019role,allouch2015private,zhou2016game,naghizadeh2017uniqueness,melo2018variational,parise2019variational,khalili2019public,jin2020games,huanginterdependent}. We similarly identify conditions under which the Nash equilibria of multiplex/multilayer network games are (not) unique, and/or stable, based on the properties of their constituent layers. Through these, we highlight potential reasons for the fragility of the uniqueness and stability of equilibria in these networks. 

\vspace{0.1in}
\subsubsection*{Paper overview and contributions} We consider two subnetworks/layers $\alpha$ and $\beta$, with (weighted and directed) interdependency matrices $G^{(1)}$ and $G^{(2)}$. In multiplex network games, each of these represent one modality or type of interaction (e.g., a person getting information from online vs. in-person connections). As both sources collectively impact agents' actions, the multiplex network game can be viewed as a game with a \emph{sum} interdependency matrix $G^{\parallel}:=\kappa G^{(1)}+ (1-\kappa)G^{(2)}$, where $\kappa\in[0,1]$ captures the effect of each layer on the agent's utility. In multilayer network games, on the other hand, each layer can be viewed as a different action dimension (e.g., a seller enhancing users' storefront vs. online buying experiences). As each agent now has a two-dimensional action space, 
the multilayer network game can be viewed as a game with a \emph{block} interdependency matrix $G^{\merge}$, with $G^{(1)}$ and $G^{(2)}$ as its diagonal blocks, and off-diagonal blocks $G^{(12)}$ and $G^{(21)}$ capturing potential spillovers between action dimensions. 

Existing approaches to identifying conditions for the uniqueness of Nash equilibrium on single-layer network games \cite{naghizadeh2015provision}, which we build on, require assessing the determinants of the (principal minors) of the game interdependency matrix $G$. The main technical challenge in conducting a similar analysis here is that for either $G^\parallel$ (a sum matrix) or $G^\merge$ (a block matrix), there are no closed form characterizations of their (principal minors') determinants in terms of those of $G^{(1)}$ and $G^{(2)}$. Therefore, it is not straightforward to assess whether having unique NE on the constituent layers translates into uniqueness of NE on a multiplex/multilayer game. 

We however show that it is possible to provide sufficient (resp. necessary) conditions, in terms of the spectral properties of the constituent layers, for an NE to be (resp. not be) guaranteed to be unique on the multiplex/multilayer network game, particularly for the special class of undirected networks (which have symmetric adjacency matrices). We outline such conditions for multiplex networks in Propositions~\ref{prop:pair-failure}-\ref{prop:lambda-min-fails}, and for multilayer networks in Propositions~\ref{prop:twolayer-multilayer-general}-\ref{prop:multilayer-BDD}, each followed by an intuitive interpretation of their implications. We further show that the conditions guaranteeing NE uniqueness also guarantee that the equilibrium will remain stable under perturbations to the agents' utilities (Proposition~\ref{prop:stability}). We illustrate our findings through numerical experiments in Section~\ref{sec:simulations}. All proofs appear at the end of the paper in Appendix~\ref{app:proofs}. Our findings generally scale to networks with more than two layers; we illustrate this in Appendix~\ref{appendix:m-layers}. (We also provide conditions for the \emph{existence} of NE in these games, \rev{and additional numerical experiments}, in the online appendix~\cite{online-appendix}). 

\vspace{0.1in}
\subsubsection*{A key takeaway} We highlight one recurring intuition gained from our analyses: we find that both the lowest ($\lambda_{\min}$) and largest ($\lambda_{\max}$) eigenvalues of the constituent layers' adjacency matrices play a key role in guaranteeing (or undermining) the uniqueness of the equilibria on the multiplex/multilayer network games. Intuitively, $\lambda_{\min}$ is a measure of a network's ``two-sidedness'', with a smaller (more negative) lowest eigenvalue being an indication that agents' actions ``rebound'' more in the network~\cite{bramoulle2014strategic}; $\lambda_{\max}$, on the other hand, is a measure of connectivity. Prior work (e.g.,  \cite{bramoulle2014strategic,naghizadeh2015provision}) had highlighted the role of the lowest eigenvalue in determining whether the NE of a single-layer network game is unique; our work highlights that the largest eigenvalue will also be consequential when agents interact over networks of networks. Intuitively, several of our conditions can be interpreted as assessing whether the connectivity of one layer can (or can not) overcome the rebounds introduced by another layer (or the rebounds across action dimensions in multilayer games). 

These findings can inform network design and interventions. {For instance, our findings suggest that a network designer or policy maker can modify the connectivity in a network layer (e.g., one social network or one industry) by intervening in the connection patterns or intensity of dependencies between its agents, to help induce a unique and/or stable equilibrium in a larger multilayer/multiplex network; this may be advantageous if interventions in all layers are not feasible. Alternatively, a policy-maker could assess the consequences of two previously independent sectors becoming interdependent (two single-layer networks joining to form a multilayer network) by using knowledge about the properties of each of the existing layers.}

\subsection{{Illustrative example}}\label{sec:illustrative-example}

Consider a set of $N$ firms, interacting with each other over a network $\alpha$, with the network representing dependencies between the firms (e.g., shared infrastructures or joint operations). Each firm can make an investment to strengthen its cybersecurity, {which impacts not only the firm but also others in its network.} These security investments can be a strategic \emph{substitute} when a better protected firm $j$ positively impacts other firms $i$ that share operations and assets with firm $j$, by decreasing the risk of business interruption or asset compromise. On the other hand, security investments can be a strategic complement when an increase in firm $j$'s protection makes a similar, but less protected, firm $i$ a more attractive target for attackers. In either case, the edge weights of network $\alpha$ can capture these types of dependencies. These strategic interactions have been commonly studied as \emph{interdependent security games} \cite{Jiang2011,naghizadeh2016opting} on a (single-layer) network. 

\emph{Multiplex network:} Consider a second network $\beta$, with the same set of firms as its nodes, but with the network capturing direct dependencies in firms' operations in another industry sector. For instance, layers $\alpha$ and $\beta$ could capture the communication and energy sectors. Then, the security effort of each firm can impact security spillovers and disruptions in both these sectors. A multiplex network can be used to capture such scenarios, and provides a more holistic understanding of the impacts of each firm's security investments. {Policy makers can leverage such models to understand the reasons for potential instabilities in the equilibria of interconnected networks (e.g., due to one sector being not sufficiently well-connected, or one sector being two-sided with high rebound/oscillations in agents' decisions). They can then design regulations that limit or shape how the two sectors interact and induce a desirable setting where the firms' actions will settle at a stable outcome.}

\emph{Multilayer network:} Alternatively, assume a second network $\beta$ captures the exchange of information related to cybersecurity threats and vulnerabilities between firms. Each firm makes two types of decisions: investments in cybersecurity in layer $\alpha$, and level of information sharing in layer $\beta$. Both these decisions concurrently impact the dependent firms' utilities. A multilayer network structure can be used to capture this type of scenario. By incorporating both direct dependencies in network $\alpha$ and information sharing decisions in network $\beta$, the combined framework offers a more nuanced understanding of the multifaceted interdependencies influencing firms' decisions in allocating their efforts in each dimension. {The methods presented here could enable each firm to assess the impact of its membership in different information sharing agreements on the (existence, uniqueness or stability of equilibrium) decisions of other firms regarding information sharing or security investments. Policy makers could similarly introduce regulations on information sharing to induce stable/unique outcomes.}

\subsection{Related literature}
Our work is at the intersection of two lines of literature: (i) the study of properties of Nash equilibria of single-layer network games, and (ii) game-theoretical modeling and analysis of multilayer and multiplex networks.

\emph{Nash equilibria in single-layer games.} Specifically, our work closely relates to studies that investigate the existence, uniqueness, and stability of Nash equilibria in single-layer networks games with linear~\cite{bramoulle2014strategic, naghizadeh2017provision, miura2008security, ballester2010interactions, allouch2019constrained} and nonlinear~\cite{naghizadeh2017uniqueness, parise2019variational, melo2018variational} best-replies. 
Our proposed models of multiplex and multilayer network games are extensions of single-layer network games with linear best-replies, by allowing agents to have various types of interactions (multiplex) and operate across different action dimensions (multilayer). For these, we identify conditions under which the multiplex/multilayer network will (or will not) 
inherit the guarantees of uniqueness/stability of the Nash equilibrium in its constituent layers.

\emph{Games on multilayer/multiplex networks.} While game-theoretical decision making on multiplex and multilayer networks has also been studied in some prior works~\cite{boccaletti2014structure,shahrivar2017game,Gómez-Gardeñes2012,Battiston_2017,CHEN2021125532,LI2021126110}, the majority of the research has focused on evolutionary games and the emergence of cooperation in public good games. In particular, the emergence and sustainability of cooperation strategies in public good games has been explored in multiplex networks by \cite{Gómez-Gardeñes2012,Battiston_2017} and in multilayer networks by \cite{CHEN2021125532,LI2021126110}, 
assuming a binary action (cooperate/defect) game, and often restricting focus to specific classes of graphs. In contrast, we propose  and analyze more general classes of multiplex and multilayer networks games, allow for arbitrary directed and weighted interactions among agents, and provide insights into the games' Nash equilibria properties.

The recent work of \cite{jin2024structured} is also closely related to ours, as it proposes a ``multi-relational'' network game which is similar to our model of multilayer games, in that it allows for a multi-dimensional action space. The main focus of \cite{jin2024structured} is on identifying ``summary representations'' of the game matrix that can be used to significantly lower the \emph{computation complexity} of ascertaining the uniqueness of NE (\revr{motivated by the fact that verifying NE uniqueness is in general NP-hard \cite{conitzer2003complexity}.}) \revr{While our focus is different—illustrating how the properties of the constituent layers of a multilayer (or multiplex) network can undermine or support the uniqueness of NE) in these games— our findings can also help lower the computational costs of verifying NE uniqueness on large multilayer/multiplex networks by assessing conditions on the eigenvalues of smaller dimensional networks (each layer) instead.} 

This paper extends our earlier work in  \cite{ebrahimi2023multiplex}, which investigated the uniqueness of Nash equilibria in multiplex network games. The current work provides additional conditions for uniqueness of NE in multiplex network games, introduces the new class of multilayer network games and its NE uniqueness conditions, establishes new results on stability and existence of NE in both multiplex and multilayer games, and illustrates our findings through numerical experiments. 

\section{Model}\label{sec:model}

\subsection{Single-layer network games}\label{sec:single-layer-model}

Consider a set of $N$ agents interacting with each other over a {single-layer} network $\alpha$. 
This network is specified by a graph $\mathcal{G}_\alpha \coloneqq \langle\mathcal{V}, {G^{(1)}}\rangle$, where the $N$ agents constitute the set of vertices $\mathcal{V}$, and ${G^{(1)}}$ is the weighted and directed \emph{adjacency} or \emph{interdependency matrix} over network $\alpha$. 

Each agent $i$ selects an \emph{effort} level $x\in\mathbb{R}_{\geq 0}$. 
The agent's utility is determined by its own effort, as well as the effort of its neighboring agents in the network. 
Specifically, {an edge in ${G^{(1)}}$, with weight $g^{(1)}_{ij}$}, indicates that agent $i$ is affected by agent $j$'s effort. If $g^{(1)}_{ij} > 0$ (respectively, $< 0$), we say agent $j$'s effort is a \emph{substitute} (respectively, \emph{complement}) to agent $i$'s effort. In our setting, a strategic substitute (resp. complement) means that effort by agent $j$ provides positive (resp. negative) externality to agent $i$, in that an increase in effort by agent $j$ allows agent $i$ to decrease its own effort (resp. requires agent $i$ to increase its effort) and still receive the same overall payoff. If there is no influence on agent $i$ from agent $j$, then $g_{ij}=0$. \revr{Further, we assume there are no self-loops; i.e., $g_{ii}=0, \forall i$.}

Let {$\mathbf{x}\in \mathbb{R}^{N\times 1}$} denote the vector of all agents' efforts. Agent $i$'s utility in network $\alpha$ is given by:
\begin{align}
    u_{i} (\mathbf{x}; {G^{(1)}}) = b_i(x_i + \sum_{{j=1}}^{{N}}g^{(1)}_{ij}x_j) - c_ix_i~,
    \label{eq:utility-alpha}
\end{align}
where {$b_i(\cdot):\mathbb{R}\rightarrow \mathbb{R}$ is a twice-differentiable, strictly increasing, and strictly concave \emph{benefit function}}, which has as its argument the \emph{aggregate effort} experienced by the agent, and {$c_i>0$ is the \emph{unit cost} of effort for agent $i$.} 

The (single-layer) network game specified by the set of $N$ agents, their efforts $\mathbf{x}$, and their utility functions $\{u_{i}(\mathbf{x}; {G^{(1)}})\}$ has been studied extensively in prior works (e.g.,~\cite{naghizadeh2017provision,bramoulle2014strategic,miura2008security,Jiang2011}). In particular, these games are known as games of \emph{linear best-replies}, as the Nash equilibrium $\mathbf{x}^*$ is determined by a set of linear best-response equations of the form:
\begin{align}
    x_i^* = \max\{0, q_i-\sum_{{j=1}}^{{N}} g^{(1)}_{ij}x^*_j\}~,
    \label{eq:linear-br}
\end{align}
where $q_i$ satisfies $b_i'(q_i)=c_i$, {with $b_i'(q_i)$ denoting the first derivative of $b_i(\cdot)$ evaluated at $q_i$. These \emph{target aggregate effort levels} $q_i$ exist and are unique given the assumptions on the functions $b_i(\cdot)$ and the unit costs $c_i$.} Intuitively, an agent $i$ wants to receive an aggregate effort level $q_i$ at equilibrium; this is the effort level at which the agent's marginal benefit and marginal cost of effort are equalized. The best-response \eqref{eq:linear-br} states that the agent exerts effort $x_i^*$ to reach an aggregate effort level $q_i$, given the spillover $\sum_{{j=1}}^{{N}} g^{(1)}_{ij}x^*_j$ received from its neighboring agents' effort at equilibrium, or exerts no effort if the spillovers already provide aggregate effort $q_i$ or higher. 

{We next propose two extensions of these existing models: multiplex network games, and multilayer network games.} 
We present \emph{two-layer} multiplex and multilayer network games; extensions to games involving more layers is straightforward. 

\subsection{Multiplex network games}\label{sec:multiplex-model}

Consider a second network defined by a graph $\mathcal{G}_\beta \coloneqq \langle\mathcal{V}, G^{(2)}\rangle$, with the same set of vertices as network $\alpha$, but its own interdependency matrix ${G^{(2)}}$.  
The two-layer multiplex network $\mathcal{G}^{\parallel} \coloneqq {\langle\mathcal{V}}, \{G^{(1)}, G^{(2)}\}\rangle$ is the environment in which interactions between the $N$ agents occur over both networks $\alpha$ and $\beta$ simultaneously, but each governed by a different interdependency matrix. 
Let the utility of agent $i$ in the multiplex network be given by:
\begin{align}
    &u_{i} (\mathbf{x}; G^{(1)}, G^{(2)}, \kappa) =\notag\\
    &\quad b_i(x_i + \kappa\sum_{{j=1}}^{{N}}g^{(1)}_{ij}x_j + (1-\kappa)\sum_{{j=1}}^{{N}} g^{(2)}_{ij}x_j) - c_ix_i,
    \label{eq:utility-multiplex}
\end{align}
where $\kappa\in[0,1]$ captures the effect of each layer on the utility, with higher $\kappa$'s indicating higher effects from network $\alpha$.\footnote{\revr{$\kappa$ provides a degree of freedom to the model designer, allowing them to adjust the ``importance'' of each layer in determining agents' actions/utilities. For instance, when the layers represent social networking platforms, $\kappa$ could be assessed based on the amount of time agents spend interacting with one another on each social network.}}

The resulting \emph{multiplex network game} is again a game of linear best-replies, where agent $i$ aims to choose $x_i^*$ to reach the same aggregate level of effort $q_i$, but this time while being exposed to spillovers $\kappa\sum_{{j=1}}^{{N}}g^{(1)}_{ij}x_j^* + (1-\kappa)\sum_{{j=1}}^{{N}} g^{(2)}_{ij}x_j^*$ from the multiplex network. Then, the multiplex network game can be viewed as a network game played over the interdependency matrix $G^{\parallel} \coloneqq \kappa G^{(1)} + (1-\kappa)G^{(2)}$.

\subsection{Multilayer network games}\label{sec:multilayer-model}
Consider again the second network $\mathcal{G}_\beta \coloneqq \langle\mathcal{V}, G^{(2)}\rangle$, with the same set of nodes as network $\alpha$.\footnote{We assume the layers have the same set of nodes to simplify notation. Our results will continue to apply when the set of nodes are different.} Assume now that agents take a different action in each layer. Let $x_i^{(l)}$ denote agent $i$'s action in layer $l\in\{1, 2\}$, and $\mathbf{x}^{(l)}$ denote the vector of all agents' efforts in layer $l$. Additionally, assume there are dependencies between agents' actions in different layers, represented by inter-layer dependency matrices $G^{(lk)}$. An edge $g^{(lk)}_{ij}$ captures how agent $i$'s utility in layer $l$ is impacted by agent $j$'s action in layer $k$.

\revr{Agent $i$'s aggregate utility from the two layers is given by $u_i=\kappa u_i^{(1)}+(1-\kappa)u_i^{(2)}$, where $\kappa\in[0,1]$ captures the effect of each layer on the agent's total utility, and agent $i$'s utility from layer $l$ is given by:}
\begin{align}\label{eq:utility-multilayer-layer}
    &u_i^{(l)}({\mathbf{x}}^{(l)}, {\mathbf{x}}^{(k)}; G^{(l)}, G^{(lk)})=&\notag\\
    &\qquad b_i^{(l)}(\,x_i^{(l)}+\sum_{{j=1}}^{{N}} g^{(l)}_{ij}x^{(l)}_j+\sum_{{j=1}}^{{N}} g^{(lk)}_{ij}x^{(k)}_j\,)-c^{(l)}_ix_i^{(l)}, 
\end{align}
where $k$ is the index of the other layer, and $b_i^{(l)}(\cdot)$ and $c_i^{(l)}$ are the benefit function and unit cost of action dimension $l$, respectively. The argument of the benefit function is the aggregate effort in action dimension $l$ experienced by the agent, and it comes from three sources: the agent's own effort in action dimension $l$, the intra-layer spillovers from action dimension $l$ of neighbors in layer $l$, and the inter-layer spillovers from action dimension $k$ of neighbors in layer $k$.

Accordingly, agent $i$'s best-responses in each action dimension are given by
\begin{align}
    {x}_i^{*(1)} = \max\{0, {q}^{(1)}_i-\big(\begin{pmatrix}
        G^{(1)} & G^{(12)}
    \end{pmatrix} \begin{bmatrix}
        \mathbf{x}^{*(1)}\\
        \mathbf{x}^{*(2)}
    \end{bmatrix}\big)_i\}~,\notag\\
    {x}_i^{*(2)} = \max\{0, {q}^{(2)}_i-\big(\begin{pmatrix}
        G^{(21)} & G^{(2)}
    \end{pmatrix} \begin{bmatrix}
        \mathbf{x}^{*(1)}\\
        \mathbf{x}^{*(2)}
    \end{bmatrix}\big)_i\}~,
    \label{eq:multilayer-br-layer}
\end{align}
with $q^{(l)}_i$ satisfying ${b^{(l)}_i}'(q^{(l)}_i)=c_i^{(l)}$. Or compactly: 
\begin{align}
    \mathbf{x}_i^* = \max\{\mathbf{0}, \mathbf{q}_i-(G^\merge\mathbf{x}^*)_{[i,i+N]}\}~,
    \label{eq:multilayer-br}
\end{align}
where the max operator is element-wise, $(G^\merge\mathbf{x}^*)_{[i,i+N]}$ is the $2\times 1$ vector consisting of the $i^\textsuperscript{th}$ and $(i+N)^\textsuperscript{th}$ entries of the $2N\times 1$ vector $G^\merge\mathbf{x}^*$, and 
\begin{align}
    G^{\merge} \coloneqq \begin{pmatrix}\label{blockmat}
    G^{(1)} & G^{(12)}\\
    G^{(21)} & G^{(2)}
    \end{pmatrix}
\end{align}
is a $2N\times2N$ \emph{supra-adjacency} matrix. Then, the multilayer network game can be viewed as a network game with a two-dimensional action space played over the matrix $G^{\merge}$. 

\subsection{Preliminaries: Uniqueness of Nash equilibria of single-layer network games}\label{sec:single-layer-prelim}

We first review existing conditions on the game adjacency matrix $G$ under which the equilibria of single-layer network games are (guaranteed to be) unique. We will later evaluate these conditions on $G^\parallel$ (for multiplex networks, Section \ref{sec:multiplex}) and $G^\merge$ (for multilayer networks, Section \ref{sec:multilayer}), identifying when they do (not) hold in terms of the properties of the layers $G^{(1)}$ and $G^{(2)}$, and inter-layer interactions $\{G^{(12)},G^{(21)}\}$. 

We build on the findings of \cite{naghizadeh2017provision}, which explored the connection between finding the Nash equilibrium of games of linear best-replies and linear complementarity problems (LCPs), to identify conditions for the existence and uniqueness of the NE of single-layer network games. Formally, an LCP($M$, $\mathbf{b}$) seeks to find two $N\times 1$ vectors, $\mathbf{w}$ and $\mathbf{z}$, satisfying:
\begin{align}\label{LCPdef}
    &\mathbf{w}-M\mathbf{z} = \mathbf{b}\notag\\
    &\mathbf{w}\ge 0 \,,\;\mathbf{z}\ge 0 \,,\;\text{and}\; w_iz_i=0, \forall i=1, \ldots, N 
\end{align}
where $M$ is an $N\times N$ matrix, $\mathbf{b}$ is an $N\times 1$ vector, \revr{and  the inequalities indicate element-wise non-negativeness of the vectors}. Comparing this with \eqref{eq:linear-br}, we observe that finding the NE of a single-layer network game is equivalent to solving the LCP$(I+G, -\mathbf{q})$. This equivalence allows us to leverage existing characterizations of the properties of LCP solutions to assess a network game's NE properties. 

To present conditions for NE uniqueness, we begin with the following definition:
\begin{definition}
A square matrix $M$ is a \emph{P-matrix} (denoted $M\in \mathcal{P}$) if the determinants of all its principal minors (i.e., the square sub-matrices obtained from $M$ by removing a set of rows and their corresponding columns) are strictly positive.
\end{definition}

The class of P-matrices includes positive definite (PD) matrices as a special case;\footnote{A common convention adopted in some of the literature is to define positive definiteness for symmetric (or Hermitian) matrices, owing to their roots in quadratic forms. However, we adopt the more general definition here: A square matrix $M$ (whether symmetric or not) is positive definite if $x^TMx>0$ for all $x\neq 0$.} in particular, every PD matrix (whether symmetric or not) is a P-matrix, but there are (asymmetric) P-matrices that are not PD \cite{murty1988linear}. We also note that for symmetric matrices, the two notions are equivalent, i.e., a symmetric matrix is a P-matrix if and only if it is PD.

The following theorem provides the necessary and sufficient condition for the Nash equilibrium of a single-layer network game (whether symmetric or not) to be unique. 

\vspace{0.05in}
\begin{theorem}\label{thm:single-uniqueness}\cite[Theorem 1]{naghizadeh2017provision}
    The single-layer network game with an interdependency matrix $G$ has a unique Nash equilibrium if and only if $I+G$ is a P-matrix.
\end{theorem}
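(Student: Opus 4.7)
The plan is to prove the theorem via the equivalence between the Nash equilibria of the game and the solutions of a linear complementarity problem, together with the classical Samelson--Thrall--Wesler characterization of LCPs with unique solutions.

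First, I would make the LCP correspondence explicit: setting $\mathbf{z} = \mathbf{x}^*$ and $\mathbf{w} = (I+G)\mathbf{x}^* - \mathbf{q}$, a coordinate-wise case analysis shows that the best-response equation \eqref{eq:linear-br} is equivalent to the LCP system \eqref{LCPdef} with $M = I+G$ and $\mathbf{b} = -\mathbf{q}$. The two branches of the $\max$ operator correspond precisely to the two possibilities in the complementarity condition $w_i z_i = 0$ together with $\mathbf{w}, \mathbf{z} \geq 0$, so the NE of the game are in bijection with the solutions of LCP$(I+G, -\mathbf{q})$.

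For the ``if'' direction, I would invoke the standard fact that LCP$(M, \mathbf{b})$ admits a unique solution for every right-hand side $\mathbf{b}$ whenever $M$ is a P-matrix; applied to $M = I + G$ and $\mathbf{b} = -\mathbf{q}$, this yields uniqueness of the NE. For the converse, I would argue by contrapositive: if $I+G \notin \mathcal{P}$, then some principal submatrix has nonpositive determinant, and I would construct a target aggregate effort vector $\mathbf{q}$ (with all $q_i > 0$, hence realizable by admissible choices of benefit functions and unit costs, since the map $c_i \mapsto q_i$ defined implicitly by $b_i'(q_i) = c_i$ ranges over the positive reals under the stated assumptions on $b_i$) for which the corresponding LCP admits two distinct solutions.

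The main obstacle is this converse step. While the forward direction essentially reduces to citing the classical LCP theorem, the contrapositive requires exhibiting a concrete witness of non-uniqueness within the admissible parameter range of the game. The standard remedy is to first produce a ``local'' witness on the index set of the offending principal minor---where nonpositivity of the determinant can be used to manufacture two distinct complementary solutions of the corresponding sub-LCP---and then to pad the remaining coordinates with $q_i$ values chosen large enough so that those coordinates are forced to be strictly positive (i.e., active, with $w_i = 0$) at both candidate equilibria. Checking that this extension preserves feasibility ($\mathbf{w}, \mathbf{z} \geq 0$) and complementarity on all coordinates lifts the local non-uniqueness to the full $N$-dimensional game, completing the ``only if'' direction.
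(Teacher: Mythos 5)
Your first two steps---the coordinate-wise identification of Nash equilibria with solutions of LCP$(I+G,-\mathbf q)$ and the appeal to the Samelson--Thrall--Wesler theorem for the ``if'' direction---are exactly the route the paper takes (it states the LCP equivalence in Section~\ref{sec:single-layer-prelim} and imports the theorem from \cite{naghizadeh2017provision} rather than reproving it), so no disagreement there. The genuine gap is in your ``only if'' step. You insist on a witness $\mathbf q$ with all $q_i>0$ carrying \emph{two distinct} equilibria. The positivity restriction is unnecessary---since $b_i'$ is positive and strictly decreasing, \emph{any} real $q_i$ is realizable (e.g.\ $b_i(x)=1-e^{-x}$, $c_i=e^{-q_i}$; note the paper's own stability examples use $\mathbf q$ with negative entries)---and it is also fatal to the claim as stated: for $G=\begin{psmallmatrix}0&-1\\-1&0\end{psmallmatrix}$ we have $I+G\notin\mathcal P$, yet for every $q_1,q_2>0$ the required best responses $x_1=q_1+x_2$ and $x_2=q_2+x_1$ are inconsistent and the game has \emph{no} equilibrium at all, so a two-solution witness in the positive orthant need not exist and your contrapositive cannot be completed as written.

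Your padding step is also oriented the wrong way. Forcing the remaining coordinates to be active ($w_i=0$, $x_i>0$) by taking their $q_i$ large couples their equilibrium efforts back into the offending index set through $G$, so the two complementary solutions of the sub-LCP generally cease to be solutions of the full problem; nothing in the sketch controls this feedback. The standard lift keeps those coordinates \emph{inactive}: with $z_i=0$ off the offending set $\alpha$ they inject no spillover into the $\alpha$-block, and one only needs $w_i=((I+G)z)_i-q_i\ge 0$, achievable by taking those $q_i$ negative enough. For the local witness itself, make the mechanism explicit: $\det\big((I+G)_{\alpha\alpha}\big)\le 0$ forces a real eigenvalue $\le 0$ of that block, hence a nonzero $z$ supported on $\alpha$ with $z_i\big((I+G)z\big)_i\le 0$ for all $i$; then $z^+$ and $z^-$ are two distinct solutions of LCP$(I+G,\mathbf b)$ with $b_i=-\min\{((I+G)z^+)_i,\,((I+G)z^-)_i\}$ (this one formula simultaneously handles the $\alpha$-block and keeps the off-$\alpha$ coordinates inactive), and $\mathbf q=-\mathbf b$ is realizable as above. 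Alternatively, since the theorem only asserts failure of \emph{uniqueness}---non-existence negates it just as well---you can skip the construction entirely and invoke the full LCP characterization ($M\in\mathcal P$ iff LCP$(M,\mathbf b)$ has exactly one solution for every $\mathbf b$), which is what the cited source does.
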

\vspace{0.05in}

The following corollary is the special case of Theorem~\ref{thm:single-uniqueness} for {undirected networks (with symmetric adjacency matrices)}.

\vspace{0.05in}
\begin{corollary}\label{cor:single-uniqueness-symmetric}
\cite{bramoulle2014strategic},\cite[Corollary 1]{naghizadeh2017provision} The single-layer network game with a symmetric interdependency matrix $G$ has a unique Nash equilibrium if and only if $|\lambda_{\min}(G)|<1$. 
\end{corollary}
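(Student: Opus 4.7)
The plan is to derive the corollary as a direct specialization of Theorem~\ref{thm:single-uniqueness} to the symmetric case, invoking the equivalence of P-matrices and positive definite matrices noted in the text. First I would apply Theorem~\ref{thm:single-uniqueness} to reduce the uniqueness question to checking whether $I+G$ is a P-matrix. Since $G$ is symmetric by assumption, $I+G$ is also symmetric, and the preliminaries already observe that for symmetric matrices the P-matrix property coincides with positive definiteness. Hence it suffices to characterize positive definiteness of $I+G$ spectrally.

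Next I would use the fact that the eigenvalues of $I+G$ are exactly $1+\lambda_i(G)$, where $\lambda_i(G)$ ranges over the (real) eigenvalues of $G$. Positive definiteness of $I+G$ is equivalent to all these being strictly positive, which is in turn equivalent to $\lambda_{\min}(G) > -1$. So at this stage the uniqueness condition has been shown to be $\lambda_{\min}(G) > -1$.

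The only remaining step is to convert $\lambda_{\min}(G) > -1$ into the stated form $|\lambda_{\min}(G)|<1$. Here I would invoke the no-self-loop assumption stated earlier in the model, which gives $\operatorname{tr}(G)=\sum_i \lambda_i(G)=0$. Consequently $\lambda_{\min}(G) \le 0$ (with equality only in the trivial case $G=0$), so $|\lambda_{\min}(G)|=-\lambda_{\min}(G)$, and $\lambda_{\min}(G) > -1$ is the same as $|\lambda_{\min}(G)|<1$. This closes the equivalence in both directions.

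The proof is essentially routine once Theorem~\ref{thm:single-uniqueness} and the symmetric P-matrix/PD equivalence are in hand; I do not anticipate a substantive obstacle. The only subtlety, which is easy to miss, is justifying the absolute value in the statement: without the zero-diagonal assumption one would only get the one-sided bound $\lambda_{\min}(G)>-1$, and so the use of $\operatorname{tr}(G)=0$ to enforce $\lambda_{\min}(G)\le 0$ is the small but necessary bookkeeping step that makes the corollary take the form claimed.
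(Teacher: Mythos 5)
Your proof is correct and follows essentially the same route the paper intends for this (cited) corollary: specialize Theorem~\ref{thm:single-uniqueness} using the symmetric P-matrix $\Leftrightarrow$ positive definite equivalence noted in the preliminaries, shift the spectrum of $I+G$, and use the zero-diagonal (trace-zero) assumption to convert $\lambda_{\min}(G)>-1$ into $|\lambda_{\min}(G)|<1$. No gaps.
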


\section{Multiplex Network Games}\label{sec:multiplex}
In this section, we analyze the uniqueness of Nash equilibria of multiplex network games, in terms of the properties of $G^{(1)}$ and $G^{(2)}$ of their constituent layers. 

To leverage the result of Theorem \ref{thm:single-uniqueness}, we need to check when $I+G^\parallel$, the weighted sum of $I+G^{(1)}$ and $I+G^{(2)}$, is a P-matrix. This will require us to check that the determinants of all principal minors of $I+G^\parallel$ are positive. However, the determinant of the sum of two square matrices $G^{(1)}$ and $G^{(2)}$ is in general not expressible in terms of the determinants of the two matrices.\footnote{The \emph{Marcus–de Oliveira determinantal conjecture}, which conjectures that the determinant of the sum of two matrices is in a convex hull determined by the eigenvalues of the two matrices, remains as one of the open problems in matrix theory, with the conjecture shown to hold for some special classes including Hermitian matrices~\cite{fiedler1971bounds}.} This means that knowledge of the P-matrix property of $I+G^{(1)}$ and/or $I+G^{(2)}$ does not necessarily help establish the P-matrix property for $I+G^\parallel$. 
In fact, the following example shows that the sum of two P-matrices is \emph{not} always a P-matrix. 

\vspace{0.05in}
\begin{example}\label{ex:sum-of-p-matrices}
Consider a two-agent multiplex network game with a benefit function $b_i(x)=1-\exp(-x)$ and unit costs of effort $c_1=\frac{1}{e}$ and $c_2=\frac{1}{\sqrt{e}}$. Let the layers have (asymmetric) interdependency matrices $G^{(1)}=\begin{psmallmatrix}0 & 4 \\ \frac{1}{5} & 0 \end{psmallmatrix}$ and $G^{(2)}=\begin{psmallmatrix}0 & 0 \\ 1 & 0 \end{psmallmatrix}$. Then, $I+G^{(1)}$ and $I+G^{(2)}$ are P-matrices, and each layer by itself has a unique NE. Let $\kappa=0.5$. Then, $I+G^\parallel=\begin{psmallmatrix} 1 & 2 \\ \frac12(\frac{1}{5}+1) & 1 \end{psmallmatrix}$. This is not a P-matrix, and the multiplex has two Nash equilibria: $\mathbf{x}^* = \begin{psmallmatrix}
     0 \\ 0.5
 \end{psmallmatrix}$ or $\mathbf{x}^* = \begin{psmallmatrix}
     1 \\ 0
 \end{psmallmatrix}$. 
\end{example}

\subsection{When is $I+G^\parallel$ not a P-matrix?}\label{sec:negative-results-p-matrix}

We now generalize the intuition from Example~\ref{ex:sum-of-p-matrices} to identify conditions under which $I+G^\parallel$ is not a P-matrix. In particular, note that $I+G^\parallel$ is a P-matrix if and only if the determinant of \emph{all} its principal minors are positive. The following proposition identifies conditions under which at least one of the principal minors has a non-positive determinant. 

\vspace{0.05in}
\begin{proposition}\label{prop:pair-failure}
    Let $M_{ij}^l$ be the $2\times 2$ minor obtained by removing all rows and columns except $i$ and $j$ from $G^{\parallel}$. If there exists a pair of agents $i$ and $j$ such that 
\begin{align*}
    &\frac{g^{(1)}_{ij}}{g^{(2)}_{ij}}(1-\det(M^\alpha_{ij}))+\frac{g^{(2)}_{ij}}{g^{(1)}_{ij}}(1-\det(M^\beta_{ij})) \notag\\
    & \qquad \geq 2 + \frac{\kappa}{1-\kappa}\det(M^\alpha_{ij}) + \frac{1-\kappa}{\kappa}\det(M^\beta_{ij})~,
\end{align*}
then the multiplex network is {not guaranteed} to have a unique Nash equilibrium.\footnote{That is, there are benefit functions and unit costs for which the multiplex network either does not have a Nash equilibrium, or has multiple equilibria.} 
\end{proposition}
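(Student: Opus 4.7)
The plan is to apply Theorem~\ref{thm:single-uniqueness} to the multiplex game's effective matrix $G^{\parallel}=\kappa G^{(1)}+(1-\kappa)G^{(2)}$: a multiplex network is \emph{guaranteed} to have a unique Nash equilibrium iff $I+G^{\parallel}$ is a P-matrix. Hence, to prove the proposition it suffices to exhibit a single principal minor of $I+G^{\parallel}$ whose determinant is non-positive, and the natural candidate is a $2\times 2$ principal minor formed by some pair of indices $i,j$ for which the hypothesis holds.

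First I would restrict to the $2\times 2$ principal submatrix of $I+G^{\parallel}$ at rows/columns $\{i,j\}$; because the diagonal of $G^{\parallel}$ is zero (no self-loops), this submatrix has the form $\bigl(\begin{smallmatrix}1 & g^{\parallel}_{ij}\\ g^{\parallel}_{ji} & 1\end{smallmatrix}\bigr)$, with $g^{\parallel}_{ij}=\kappa g^{(1)}_{ij}+(1-\kappa)g^{(2)}_{ij}$ and similarly for $g^{\parallel}_{ji}$. Its determinant is $1-g^{\parallel}_{ij}g^{\parallel}_{ji}$, which I would expand to
\begin{align*}
1-\kappa^{2}g^{(1)}_{ij}g^{(1)}_{ji}-(1-\kappa)^{2}g^{(2)}_{ij}g^{(2)}_{ji}-\kappa(1-\kappa)\bigl(g^{(1)}_{ij}g^{(2)}_{ji}+g^{(2)}_{ij}g^{(1)}_{ji}\bigr).
\end{align*}
The ``diagonal'' products are immediately expressible through the layer-wise $2\times2$ principal minors: $g^{(1)}_{ij}g^{(1)}_{ji}=1-\det(M^{\alpha}_{ij})$ and $g^{(2)}_{ij}g^{(2)}_{ji}=1-\det(M^{\beta}_{ij})$, where $M^{\alpha}_{ij}$ and $M^{\beta}_{ij}$ denote the corresponding principal minors of $I+G^{(1)}$ and $I+G^{(2)}$, respectively.

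The crux is re-expressing the cross term $g^{(1)}_{ij}g^{(2)}_{ji}+g^{(2)}_{ij}g^{(1)}_{ji}$ in the same language. The key algebraic trick is to multiply and divide each summand by an appropriate factor, yielding
\begin{align*}
g^{(1)}_{ij}g^{(2)}_{ji}+g^{(2)}_{ij}g^{(1)}_{ji} &= \tfrac{g^{(1)}_{ij}}{g^{(2)}_{ij}}\bigl(g^{(2)}_{ij}g^{(2)}_{ji}\bigr)+\tfrac{g^{(2)}_{ij}}{g^{(1)}_{ij}}\bigl(g^{(1)}_{ij}g^{(1)}_{ji}\bigr) \\
&= \tfrac{g^{(1)}_{ij}}{g^{(2)}_{ij}}\bigl(1-\det(M^{\beta}_{ij})\bigr)+\tfrac{g^{(2)}_{ij}}{g^{(1)}_{ij}}\bigl(1-\det(M^{\alpha}_{ij})\bigr),
\end{align*}
valid when $g^{(1)}_{ij},g^{(2)}_{ij}\neq 0$ (the degenerate cases are simpler and can be addressed separately). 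Substituting this back, dividing through by the positive quantity $\kappa(1-\kappa)$, and using the identity $\tfrac{1}{\kappa(1-\kappa)}=\tfrac{\kappa}{1-\kappa}+\tfrac{1-\kappa}{\kappa}+2$, the condition ``determinant of the $(i,j)$ minor $\leq 0$'' rearranges exactly into the inequality stated in the proposition (modulo the $\alpha/\beta$ labeling on the left-hand side).

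The argument is then complete: whenever the hypothesis holds for some pair $i,j$, the $2\times 2$ principal minor of $I+G^{\parallel}$ at $(i,j)$ has non-positive determinant, so $I+G^{\parallel}\notin\mathcal{P}$; Theorem~\ref{thm:single-uniqueness} then implies that NE uniqueness cannot be guaranteed—concretely, target effort levels $q_i,q_j$ (and hence benefit functions and unit costs) exist for which the induced LCP either has multiple solutions or no solution. The computations themselves are routine, and the only real obstacle is the algebraic identity that rewrites the mixed cross term as a combination of the layers' own principal-minor determinants; once that identity is in hand, everything else is bookkeeping, and the two degenerate cases $g^{(1)}_{ij}=0$ or $g^{(2)}_{ij}=0$ can be handled by a direct computation from the expanded determinant expression.
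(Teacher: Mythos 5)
Your proof is correct and takes essentially the same route as the paper: examine the $2\times 2$ principal minor of $I+G^{\parallel}$ for the pair $(i,j)$, impose a non-positive determinant (violating the necessary P-matrix condition from Theorem~\ref{thm:single-uniqueness}), and rewrite it via $g^{(1)}_{ij}g^{(1)}_{ji}=1-\det(M^{\alpha}_{ij})$ and $g^{(2)}_{ij}g^{(2)}_{ji}=1-\det(M^{\beta}_{ij})$—the paper's proof is the same argument stated without carrying out the algebra. Your remark about the $\alpha/\beta$ pairing is well taken: the algebra does produce the cross term with the labels paired as in your derivation, so the discrepancy reflects a labeling slip in the stated inequality rather than a gap in your argument.
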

\vspace{0.05in}

\emph{Intuitive interpretation.} It is worthwhile to again note that even if both layers satisfy the P-matrix condition, so that the sub-determinants $\det(M^l_{ij})$ are positive, the condition of Proposition~\ref{prop:pair-failure} can still hold at sufficiently large $g^{(1)}_{ij}$ (the determinant term can be kept constant by adjusting $g^{(1)}_{ji}$ accordingly), so that the multiplex will not satisfy the P-matrix condition. Intuitively, this means that the NE may not be unique if there are sufficiently large cyclical dependencies between a pair of agents across the two layers. 
Proposition~\ref{prop:pair-failure} can also be extended to state conditions on higher order principal minors (e.g., highlighting that non-uniqueness can be caused by cyclical interactions among a set of three agents). 

\subsection{When is $I+G^\parallel$ a P-matrix?}\label{sec:positive-results-p-matrix}

While, as shown above, the sum of two P-matrices is in general not a P-matrix, there are specific {subclasses of P-matrices} which are closed under summation, as shown next.  

\vspace{0.05in}
\begin{proposition}\label{prop:sum-of-p-matrices}
For any of the following cases, the multiplex network game will have a unique Nash equilibrium: 
\begin{enumerate}
    \item $G^{(1)}$ and $G^{(2)}$ are \emph{symmetric} {P-matrices}.  
    \item $I+G^{(1)}$ and $I+G^{(2)}$ are \emph{strictly row diagonally dominant}, i.e.,  $\sum_{j\neq i} |g^{(l)}_{ij}| < 1, \forall i, l\in\{1, 2\}$. 
    \item $I+G^{(1)}$ and $I+G^{(2)}$ are \emph{B-matrices}, i.e.,  $1 + \sum_{j} g^{(l)}_{ij} > 0$ and $\frac{1}{N}(1+\sum_{j\neq k} g^{(l)}_{ij}) > g^{(l)}_{ik}$, $\forall i, \forall k\neq i, l\in\{1, 2\}$. 
\end{enumerate}
\end{proposition}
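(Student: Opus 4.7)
The plan is to invoke Theorem~\ref{thm:single-uniqueness} and verify in each of the three cases that $I + G^{\parallel}$ is a P-matrix. Observing that $I + G^{\parallel} = \kappa(I + G^{(1)}) + (1-\kappa)(I + G^{(2)})$ is a convex combination, the strategy is, in each case, to identify a subclass of P-matrices that (i) contains both $I + G^{(1)}$ and $I + G^{(2)}$ by hypothesis and (ii) is closed under convex combinations, after which Theorem~\ref{thm:single-uniqueness} finishes the job.

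For case (1), the equivalence between symmetric P-matrices and symmetric positive definite matrices (noted just after the definition of P-matrix) makes $I + G^{(1)}$ and $I + G^{(2)}$ symmetric PD. The PD cone is closed under convex combinations (immediate from the quadratic-form characterization $x^T M x > 0$), so $I + G^{\parallel}$ is symmetric PD and hence a P-matrix. For case (2), the triangle inequality gives $\sum_{j\neq i} |\kappa g^{(1)}_{ij} + (1-\kappa) g^{(2)}_{ij}| \leq \kappa \sum_{j\neq i} |g^{(1)}_{ij}| + (1-\kappa) \sum_{j\neq i} |g^{(2)}_{ij}| < 1$, so $I + G^{\parallel}$ is strictly row diagonally dominant with all diagonal entries equal to $1$; every principal submatrix inherits this SDD structure, and Gershgorin's disks then place all of its eigenvalues in the right half-plane, forcing a strictly positive determinant for each. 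For case (3), both defining inequalities of a B-matrix are linear in the matrix entries, so they are preserved under convex combinations, making $I + G^{\parallel}$ itself a B-matrix; the known inclusion of the B-matrix class in the P-matrix class then completes the argument.

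The proof is essentially bookkeeping once the correct subclass is identified in each case: closure under convex combination is immediate, and the inclusion of each subclass in the P-matrix class is classical. The main externally quoted fact is the B-matrix to P-matrix inclusion used in case (3); a self-contained alternative would verify directly that every principal submatrix of a B-matrix has positive determinant, but this essentially reproduces the standard proof. In all three settings, applying Theorem~\ref{thm:single-uniqueness} then yields the uniqueness of the multiplex network game's Nash equilibrium.
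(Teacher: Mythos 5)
Your proposal is correct and follows essentially the same route as the paper's proof: symmetric P-matrix $=$ PD and closure of PD under (convex) sums for case (1), the triangle inequality plus Gershgorin applied to every principal submatrix for case (2), and closure of the B-matrix class under convex combinations together with its inclusion in the P-matrix class for case (3).
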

\vspace{0.05in}

\emph{Intuitive interpretation.}  We delve deeper into the case of symmetric matrices in the next subsection. The remaining two cases
set limits on the influence of agents on each other; Proposition~\ref{prop:sum-of-p-matrices} notes that these limits will carry over when two networks connect. In particular, a row diagonally dominant matrix limits the cumulative maximum influence of neighboring agents on an agent $i$'s utility, relative to the agent's self-influence (here, normalized to 1). If the externalities received from other agents are limited in both layers, they will also be limited when layers are interconnected. B-matrices on the other hand require that the row averages dominate any off-diagonal entries, meaning that no one neighbor's externality on agent $i$ is higher than the average of all the other influences the agent experiences (both self-influence and the externality from the remaining neighbors). Again, if this is true in both layers, it will remain true when the two layers are interconnected as well, guaranteeing NE uniqueness. 

\subsection{Special case: symmetric matrices} We now turn to the special case of {undirected networks}. We begin by noting that the sum of two positive definite matrices is a positive definite matrix. That is, if we know that two {undirected} layers already have structures that are conducive to unique NE, so will the {undirected} multiplex network emerging from joining them. In light of this, we focus on situations where an {undirected} first layer supports a unique NE, yet the second layer does not. We then identify conditions under which the multiplex is (Proposition \ref{prop:multiplex-pert-pos}) and is not (Proposition \ref{prop:lambda-min-fails}) guaranteed to have a unique NE. 

We begin with a positive result: a multiplex may retain NE uniqueness under the following condition. 

\vspace{0.05in}
\begin{proposition}\label{prop:multiplex-pert-pos}
    In a multiplex network game where the first layer is such that $I+G^{(1)}$ is a symmetric positive definite matrix, if the following inequality holds, then the multiplex will have a unique Nash equilibrium:
    \begin{align*}
    \lambda_{\max}(G^{(2)})<\frac{2\kappa-1}{1-\kappa}+\frac{\kappa}{1-\kappa}\lambda_{\min}(G^{(1)})~.
    \end{align*}
\end{proposition}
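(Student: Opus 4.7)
The plan is to invoke Corollary~\ref{cor:single-uniqueness-symmetric}: since $G^{\parallel} = \kappa G^{(1)}+(1-\kappa)G^{(2)}$ is symmetric whenever both layers are, the multiplex game has a unique Nash equilibrium if and only if $|\lambda_{\min}(G^{\parallel})|<1$, equivalently $I+G^{\parallel}$ is positive definite. So it suffices to show that the hypothesized condition implies $\lambda_{\min}(I+G^{\parallel})>0$.

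The key structural observation I would exploit is the decomposition
\begin{equation*}
I+G^{\parallel} \;=\; \kappa\bigl(I+G^{(1)}\bigr)+(1-\kappa)\bigl(I+G^{(2)}\bigr),
\end{equation*}
which writes $I+G^{\parallel}$ as a convex combination of $I+G^{(1)}$ (positive definite by hypothesis) and $I+G^{(2)}$ (possibly indefinite, since the second layer is allowed to not support a unique NE on its own). Evaluating the quadratic form on any unit vector $v$ gives
\begin{equation*}
v^T(I+G^{\parallel})v = \kappa\, v^T(I+G^{(1)})v + (1-\kappa)\, v^T(I+G^{(2)})v,
\end{equation*}
and I would lower-bound the first piece by $\kappa\,\lambda_{\min}(I+G^{(1)}) = \kappa(1+\lambda_{\min}(G^{(1)}))$, which is strictly positive by the assumed positive-definiteness of $I+G^{(1)}$.

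For the second piece, which may be negative, the aim is a uniform bound of the form $v^T(I+G^{(2)})v \geq -(1+\lambda_{\max}(G^{(2)}))$. Combining with the first piece then yields
\begin{equation*}
v^T(I+G^{\parallel})v \;\geq\; \kappa\bigl(1+\lambda_{\min}(G^{(1)})\bigr) - (1-\kappa)\bigl(1+\lambda_{\max}(G^{(2)})\bigr),
\end{equation*}
and an elementary rearrangement shows that the hypothesized inequality $\lambda_{\max}(G^{(2)})<\tfrac{2\kappa-1}{1-\kappa}+\tfrac{\kappa}{1-\kappa}\lambda_{\min}(G^{(1)})$ is exactly equivalent to the right-hand side being strictly positive. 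Since the estimate is uniform over unit $v$, taking the minimum yields $\lambda_{\min}(I+G^{\parallel})>0$, and the conclusion follows from Corollary~\ref{cor:single-uniqueness-symmetric}.

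The main obstacle I anticipate is justifying the step $v^T(I+G^{(2)})v \geq -(1+\lambda_{\max}(G^{(2)}))$: the most negative value of the symmetric quadratic form is $\lambda_{\min}(I+G^{(2)}) = 1+\lambda_{\min}(G^{(2)})$, so controlling it by the \emph{largest} rather than the smallest eigenvalue of $G^{(2)}$ requires care, and amounts to bounding the spectral norm $\|I+G^{(2)}\|_2$ by $1+\lambda_{\max}(G^{(2)})$. Once this bound is in place, the rest of the argument is an algebraic rearrangement of the resulting sufficient condition.
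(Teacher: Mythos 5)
Your route is genuinely different from the paper's: you reduce the claim to positive definiteness of $I+G^{\parallel}$ via Corollary~\ref{cor:single-uniqueness-symmetric} and estimate the quadratic form of the convex combination $\kappa(I+G^{(1)})+(1-\kappa)(I+G^{(2)})$ directly, whereas the paper invokes the P-matrix perturbation result of \cite{Chen2007} with $M=\kappa(I+G^{(1)})$, perturbation $(1-\kappa)(I+G^{(2)})$, and $\beta_2(M)=\lVert M^{-1}\rVert_2$ for symmetric positive definite $M$. However, the step you flag as the ``main obstacle'' is a genuine gap, and it does not close as stated. The uniform bound $v^{T}(I+G^{(2)})v\ge -(1+\lambda_{\max}(G^{(2)}))$ over unit vectors $v$ is equivalent to $\lVert I+G^{(2)}\rVert_2\le 1+\lambda_{\max}(G^{(2)})$; since $\lVert I+G^{(2)}\rVert_2=\max\{1+\lambda_{\max}(G^{(2)}),\,|1+\lambda_{\min}(G^{(2)})|\}$ (the eigenvalues of $I+G^{(2)}$ are $1+\lambda_i(G^{(2)})$, and $\lambda_{\max}(G^{(2)})\ge 0$ because the trace is zero), the bound fails whenever $\lambda_{\min}(G^{(2)})<-2-\lambda_{\max}(G^{(2)})$. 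That regime is exactly the one the proposition is meant to tolerate, since the second layer is allowed to be ``bad'': for instance $G^{(2)}=-c(J-I)$ with $J$ the all-ones matrix (a complete graph of complements) has $\lambda_{\max}(G^{(2)})=c$ and $\lambda_{\min}(G^{(2)})=-c(N-1)$, violating the needed inequality whenever $c(N-2)>2$. The sharp elementary estimate available to your argument is $v^{T}(I+G^{(2)})v\ge 1+\lambda_{\min}(G^{(2)})$, which yields the sufficient condition $\kappa\bigl(1+\lambda_{\min}(G^{(1)})\bigr)+(1-\kappa)\bigl(1+\lambda_{\min}(G^{(2)})\bigr)>0$ --- a condition on $\lambda_{\min}(G^{(2)})$, not on $\lambda_{\max}(G^{(2)})$ --- so as written your proof establishes a different statement, and no algebraic rearrangement recovers the claimed inequality without an additional hypothesis such as $\lambda_{\min}(G^{(2)})\ge -2-\lambda_{\max}(G^{(2)})$ (e.g.\ $I+G^{(2)}\succeq 0$), or replacing $\lambda_{\max}(G^{(2)})$ by the spectral radius $\rho(G^{(2)})$, for which $\lVert I+G^{(2)}\rVert_2\le 1+\rho(G^{(2)})$ does hold.

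It is worth noting that you have put your finger on precisely the thinnest point of the published argument: after applying \cite{Chen2007}, the paper's proof also replaces $\lVert I+G^{(2)}\rVert_2$ in the condition $\frac{1}{\kappa}\lVert(I+G^{(1)})^{-1}\rVert_2\,(1-\kappa)\lVert I+G^{(2)}\rVert_2<1$ by $\lambda_{\max}(I+G^{(2)})$, an identification subject to the same caveat. So the comparison is: both routes reduce the proposition to controlling $\lVert I+G^{(2)}\rVert_2$ by $1+\lambda_{\max}(G^{(2)})$; yours is more self-contained (no external perturbation lemma), but neither is complete until that identification is justified or the hypothesis strengthened. To finish your proof, either add an assumption ensuring $1+\lambda_{\max}(G^{(2)})$ is indeed the spectral norm of $I+G^{(2)}$, or prove the variant of the proposition stated with $\lambda_{\min}(G^{(2)})$ (equivalently $\rho(G^{(2)})$), for which your convex-combination estimate goes through verbatim.
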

\vspace{0.05in}

\emph{Intuitive interpretation.} Proposition~\ref{prop:multiplex-pert-pos} is most useful when the multiplex closely resembles the first layer, as the second layer is comparatively ``weaker''. 
Proposition~\ref{prop:multiplex-pert-pos} states a condition for quantifying this relative weakness. This metric evaluates the strength (connectivity) of the second layer relative to the bipartiteness of the first layer. Specifically, if the first layer is highly bipartite (with the smallest eigenvalue approaching $-1$), then the second layer should introduce minimal changes to the network by being relatively sparse (have a small largest eigenvalue). Conversely, if the first layer exhibits less bipartiteness, it can accommodate a stronger connectivity in the second layer. This is because the potential fluctuations caused by the second layer can be mitigated by the structural resilience of the first layer.

Now we present a negative result, where the second layer can undermine the uniqueness of the NE of the multiplex.  

\vspace{0.05in}
\begin{proposition}\label{prop:lambda-min-fails}
    In a multiplex network game with {undirected} layers, if
    \begin{align*}
    &|\lambda_{\min}(G^{(2)})|\geq \frac{1}{1-\kappa} \left(1+  \kappa\lambda_{\max}(G^{(1)})\right)~,
\end{align*}
the game is not guaranteed to have a unique Nash equilibrium. 
\end{proposition}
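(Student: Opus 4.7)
The plan is to reduce the problem to a spectral estimate on $G^\parallel$ and then apply Weyl's inequality for sums of symmetric matrices. Because the layers are undirected, $G^{(1)}$ and $G^{(2)}$ are symmetric, hence so is $G^\parallel = \kappa G^{(1)} + (1-\kappa) G^{(2)}$. Corollary~\ref{cor:single-uniqueness-symmetric}, applied to $G^\parallel$, reduces the task to showing $|\lambda_{\min}(G^\parallel)| \geq 1$: violating the strict spectral condition produces, via the LCP correspondence underlying Theorem~\ref{thm:single-uniqueness}, benefit/cost choices for which the multiplex game fails to have a unique NE (either multiple equilibria or none), which is exactly what the statement ``not guaranteed to have a unique Nash equilibrium'' asserts.

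The key technical input I would invoke is Weyl's inequality: for symmetric matrices $A$ and $B$, $\lambda_{\min}(A+B) \leq \lambda_{\min}(A) + \lambda_{\max}(B)$. Applied with $A = (1-\kappa)G^{(2)}$ and $B = \kappa G^{(1)}$, and using homogeneity of eigenvalues under nonnegative scalar multiplication, this gives
\[
\lambda_{\min}(G^\parallel) \;\leq\; (1-\kappa)\,\lambda_{\min}(G^{(2)}) + \kappa\,\lambda_{\max}(G^{(1)}).
\]
Since $G^{(2)}$ has zero diagonal (no self-loops), $\mathrm{tr}(G^{(2)}) = 0$, which forces $\lambda_{\min}(G^{(2)}) \leq 0$; consequently $|\lambda_{\min}(G^{(2)})| = -\lambda_{\min}(G^{(2)})$. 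A direct algebraic rearrangement of the hypothesis $|\lambda_{\min}(G^{(2)})| \geq \tfrac{1}{1-\kappa}\bigl(1 + \kappa\,\lambda_{\max}(G^{(1)})\bigr)$ then yields
\[
(1-\kappa)\,\lambda_{\min}(G^{(2)}) + \kappa\,\lambda_{\max}(G^{(1)}) \;\leq\; -1,
\]
and combining with the Weyl bound above produces $\lambda_{\min}(G^\parallel) \leq -1$, i.e.\ $|\lambda_{\min}(G^\parallel)| \geq 1$, as required.

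I do not anticipate a substantive obstacle: the argument is essentially one application of Weyl plus algebra. The only minor subtlety is justifying the identification $|\lambda_{\min}(G^{(2)})| = -\lambda_{\min}(G^{(2)})$, which the zero-trace observation handles cleanly (and rules out the degenerate case $G^{(2)} = 0$, since the hypothesis would then force $\lambda_{\max}(G^{(1)}) \leq -1/\kappa$, contradicting $\lambda_{\max}(G^{(1)}) \geq 0$ for the zero-trace matrix $G^{(1)}$).
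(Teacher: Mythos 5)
Your proof is correct and follows essentially the same route as the paper: it bounds $\lambda_{\min}(G^\parallel)$ via Weyl's inequality by $\kappa\lambda_{\max}(G^{(1)})+(1-\kappa)\lambda_{\min}(G^{(2)})$, uses the zero-trace/zero-diagonal observation to handle signs, rearranges the hypothesis to conclude $\lambda_{\min}(G^\parallel)\leq -1$, and invokes Corollary~\ref{cor:single-uniqueness-symmetric} to conclude non-uniqueness is possible. No gaps.
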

\vspace{0.05in}

\emph{Intuitive interpretation.} Given that the lowest eigenvalue can be interpreted as a measure of a network's ``two-sidedness'' (with a smaller (more negative) lowest eigenvalue being an indication that agents' actions rebound more in a network), it is expected that a second layer with a large $|\lambda_{\min}(G^{(2)})|$ will introduce similar effects in the multiplex network. The condition in Proposition~\ref{prop:lambda-min-fails} shows that this is indeed the case: when the second layer network is significantly two-sided, it can undermine the uniqueness of the equilibrium in the multiplex network. Also, as expected, for large $\kappa$ (when the second layer is less important in determining agents' payoffs), $\lambda_{\min}(G^{(2)})$ will have less influence on the NE uniqueness. 

More interestingly, the severity of rebound effects due to the second layer network (its $\lambda_{\min}$) are compared against the extent of \emph{connectivity} of the first layer network (its largest eigenvalue $\lambda_{\max}$). In words, Proposition~\ref{prop:lambda-min-fails} states that if the connectivity of the first layer (as characterized by its largest eigenvalue) is not high enough to mute the ups and downs introduced by the second layer (as characterized by its smallest eigenvalue), then the multiplex will have either no equilibrium or multiple equilibria for some game instances. 

\section{Multilayer Network Games}\label{sec:multilayer}
We now identify conditions for guaranteed uniqueness (or lack thereof) of Nash equilibria on multilayer network games, in terms of the properties of their constituent layers $G^{(1)}$ and $G^{(2)}$ and the inter-layer interactions $\{G^{(12)}, G^{(21)}\}$. 

\subsection{When is $I+G^\merge$ not a P-matrix?}\label{sec:negative-results-p-matrix-multilayer}
First, we show that in a general network, with both inter-layer and intra-layer edges being directed and weighted, a Nash equilibrium may not be unique even if each layer is guaranteed to have a unique Nash equilibria by itself. 

\vspace{0.05in}
\begin{proposition}\label{prop:twolayer-multilayer-general}
    A multilayer network game where
    \begin{enumerate}
        \item either of the layers is not guaranteed to have a unique Nash equilibrium (i.e., $\exists l, \text{s.t. }I+G^{(l)}\notin \mathcal{P}$); or
        \item each layer has a unique NE (i.e., $I+G^{(l)}\in \mathcal{P}, \forall l$), but
        \begin{align*}
        &\det(I+G^{(1)}-G^{(12)}(I+G^{(2)})^{-1}G^{(21)})\leq 0, \text{  or, }\notag\\
        &\det(I+G^{(2)}-G^{(21)}(I+G^{(1)})^{-1}G^{(12)})\leq 0,
    \end{align*}
    \end{enumerate}
    is not guaranteed to have a unique Nash equilibrium. 
\end{proposition}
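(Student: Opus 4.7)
The plan is to recast the claim in terms of the P-matrix property of $I+G^\merge$ and then handle the two cases separately. Since the multilayer network game is, by Section~\ref{sec:multilayer-model}, a linear best-reply game on the $2N\times 2N$ supra-adjacency matrix $G^\merge$, Theorem~\ref{thm:single-uniqueness} applies verbatim: the game is guaranteed to have a unique Nash equilibrium for every choice of benefit functions and unit costs if and only if $I+G^\merge\in\mathcal{P}$. Hence it suffices to show that under condition~(1) or condition~(2), the matrix $I+G^\merge$ fails to be a P-matrix.

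For condition~(1), I would use the elementary fact that every principal submatrix of a P-matrix is itself a P-matrix (its principal minors are principal minors of the larger matrix). Partitioning the index set $\{1,\dots,2N\}$ into the layer-$1$ block $\{1,\dots,N\}$ and the layer-$2$ block $\{N+1,\dots,2N\}$, the block form \eqref{blockmat} shows that $I+G^{(1)}$ is exactly the principal submatrix of $I+G^\merge$ on the first block, and $I+G^{(2)}$ the principal submatrix on the second block. Therefore if $I+G^{(l)}\notin\mathcal{P}$ for some $l$, then $I+G^\merge\notin\mathcal{P}$, and the claim follows from Theorem~\ref{thm:single-uniqueness}.

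For condition~(2), I would invoke the Schur complement identity for block determinants. Since $I+G^{(2)}\in\mathcal{P}$, every P-matrix has strictly positive determinant (the full matrix is a principal minor of itself), so $I+G^{(2)}$ is invertible and $\det(I+G^{(2)})>0$; an analogous statement holds for $I+G^{(1)}$. Applying the standard identity
\begin{align*}
\det(I+G^\merge)=\det(I+G^{(2)})\,\det\!\bigl(I+G^{(1)}-G^{(12)}(I+G^{(2)})^{-1}G^{(21)}\bigr),
\end{align*}
or the symmetric version obtained by eliminating the first block, the hypothesis that one of the two Schur-complement determinants is non-positive yields $\det(I+G^\merge)\le 0$. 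Since a P-matrix must have strictly positive determinant, this rules out $I+G^\merge\in\mathcal{P}$, and Theorem~\ref{thm:single-uniqueness} again concludes the proof.

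The argument is largely bookkeeping once the Schur complement identity and the principal-submatrix inheritance for P-matrices are in hand, so no real technical obstacle arises. The main thing to be careful about is invertibility of the pivot block used in the Schur complement, which is precisely why case~(2) assumes $I+G^{(l)}\in\mathcal{P}$ for both layers; and to remember that "not guaranteed to have a unique Nash equilibrium" is exactly the contrapositive of the if-and-only-if in Theorem~\ref{thm:single-uniqueness}, so producing an explicit non-unique instance is unnecessary.
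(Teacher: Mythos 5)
Your proposal is correct and follows essentially the same route as the paper: case (1) via the fact that $I+G^{(1)}$ and $I+G^{(2)}$ are principal submatrices of $I+G^{\merge}$, and case (2) via the block-determinant/Schur-complement factorization of $\det(I+G^{\merge})$, with Theorem~\ref{thm:single-uniqueness} supplying the equivalence between the uniqueness guarantee and the P-matrix property. The only cosmetic difference is which diagonal block you pivot on in the Schur complement, and you correctly note the symmetric version, so nothing is missing.
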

\vspace{0.05in}

\emph{Intuitive interpretation.} The first case of the proposition is straightforward: it notes that if the game is not guaranteed to have a unique equilibrium in one action dimension, then it is not guaranteed to have a unique equilibrium in the two dimensional action space. The second case is perhaps more interesting; it argues that despite having a unique equilibrium in each action dimension separately, when the spillovers between action dimensions (due to the inter-layer connections) are taken into account, the game may no longer have a unique NE. The conditions in the second case of Proposition~\ref{prop:twolayer-multilayer-general} capture the extent of spillovers needed for this to happen. 

{To further illustrate, consider the following. From \cite{zhan2005determinantal}, for two non-singular matrices $A$ and $B$ such that $C=B^{-1}A$ is positive semi-definite, $\det(A+B)\ge\det(A)+\det(B)$. Let $A=I+G^{(1)}-G^{(12)}(I+G^{(2)})^{-1}G^{(21)}$ and $B=-(I+G^{(1)})$, {and assume these satisfy the conditions above.}
Then, if 
$\det(I+G^{(1)})\det(I+G^{(2)})\leq \det(G^{(12)})\det(G^{(21)})$, 
the condition in the second case of Proposition \ref{prop:twolayer-multilayer-general} would hold, and the multilayer game would not be guaranteed to have a unique NE. 
This can happen, e.g., if the edge weights in one or both of the inter-layer interactions are sufficiently large and of the same sign both ways. Intuitively, this can be interpreted as large rebound effects of action dimensions across the two layers, which can undermine equilibrium uniqueness.}  

\subsection{When is $I+G^\merge$ a P-matrix?}\label{sec:positive-results-p-matrix-multilayer}
In light of the negative result in the general case, we next explore two special cases when the multilayer network can be guaranteed to have a unique NE: one-way inter-layer connections, and (sufficiently bounded) games of complements. 

\subsubsection{One-way inter-layer connection} Consider a multilayer network in which links are directed only from one layer to the other, so that either $G^{(12)}=\textbf{0}$ or $G^{(21)}=\textbf{0}$.\footnote{For instance, in our illustrative example of interdependent security games (Section~\ref{sec:illustrative-example}), this could happen if information sharing decisions impact security investment decisions, but not vice versa (due to, e.g., information sharing already being mandatory).} For this special case, the supra-adjacency matrix $G^\merge$ is a \emph{block triangular matrix}, which we leverage to establish the following. 
\vspace{0.05in}
\begin{proposition}\label{prop:multilayer-oneway}
    A one-way multilayer network is guaranteed to have a unique Nash equilibrium if and only if the constituent layers are guaranteed to have unique NE (i.e., $I+G^{(l)}\in \mathcal{P}$).
\end{proposition}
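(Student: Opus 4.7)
The plan is to reduce the proposition to a statement about the P-matrix property of $I+G^{\merge}$ via Theorem~\ref{thm:single-uniqueness}, and then exploit the block triangular structure of the supra-adjacency matrix under the one-way assumption. By Theorem~\ref{thm:single-uniqueness}, the multilayer game (viewed as a single-layer game on $G^{\merge}$) has a unique NE if and only if $I+G^{\merge}\in\mathcal{P}$, so it suffices to prove that $I+G^{\merge}\in\mathcal{P}$ if and only if both $I+G^{(1)}\in\mathcal{P}$ and $I+G^{(2)}\in\mathcal{P}$.

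Without loss of generality, assume $G^{(21)}=\mathbf{0}$ (the case $G^{(12)}=\mathbf{0}$ is symmetric). Then
\begin{align*}
I+G^{\merge}=\begin{pmatrix} I+G^{(1)} & G^{(12)} \\ \mathbf{0} & I+G^{(2)} \end{pmatrix}
\end{align*}
is block upper triangular. The first key step is to observe that every principal submatrix of $I+G^{\merge}$ inherits this block triangular form: index the rows/columns of $I+G^{\merge}$ by $\{1,\dots,N\}\cup\{N+1,\dots,2N\}$, and let a principal minor be obtained by selecting $S_1\subseteq\{1,\dots,N\}$ and $S_2\subseteq\{1,\dots,N\}$ (corresponding to shifted indices $N+S_2$). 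The resulting principal submatrix is
\begin{align*}
\begin{pmatrix} (I+G^{(1)})[S_1] & G^{(12)}[S_1,S_2] \\ \mathbf{0} & (I+G^{(2)})[S_2] \end{pmatrix},
\end{align*}
whose determinant factors as $\det\bigl((I+G^{(1)})[S_1]\bigr)\cdot\det\bigl((I+G^{(2)})[S_2]\bigr)$.

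Given this factorization, the equivalence follows easily. For the ``if'' direction, when both $I+G^{(l)}\in\mathcal{P}$, each factor is strictly positive whenever $S_l\neq\emptyset$ (and equal to $1$ by the empty-product convention when $S_l=\emptyset$), so every principal minor of $I+G^{\merge}$ has a strictly positive determinant, giving $I+G^{\merge}\in\mathcal{P}$. For the ``only if'' direction, set $S_2=\emptyset$ to recover the principal minors of $I+G^{(1)}$, and set $S_1=\emptyset$ to recover those of $I+G^{(2)}$; positivity of all principal minors of $I+G^{\merge}$ then forces the P-matrix property in each layer.

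No single step is particularly hard; the main conceptual point is recognizing that \emph{principal} minors (and not merely the full determinant) preserve the block triangular structure, which is what makes the factorization apply uniformly. Once that observation is in place, the proof reduces to a straightforward bookkeeping argument, and there is no analog of the cross-term obstructions that arose in the multiplex case (Propositions~\ref{prop:pair-failure}--\ref{prop:lambda-min-fails}) or the general multilayer case (Proposition~\ref{prop:twolayer-multilayer-general}).
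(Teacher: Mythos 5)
Your proof is correct and follows essentially the same route as the paper's: exploit the block upper triangular structure of every principal submatrix of $I+G^{\merge}$ under the one-way assumption, factor its determinant as the product of the corresponding principal minors of $I+G^{(1)}$ and $I+G^{(2)}$, and conclude the P-matrix equivalence via Theorem~\ref{thm:single-uniqueness}. If anything, you are slightly more explicit than the paper on the ``only if'' direction (taking $S_1=\emptyset$ or $S_2=\emptyset$ to recover each layer's minors), which the paper leaves implicit since the layers are principal submatrices of $I+G^{\merge}$.
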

\vspace{0.05in}
\emph{Intuitive interpretation.} We note that while the above result is independent of the strength of the one-way inter-layer links, it is highly sensitive to the inter-layer links being one-way, as it removes the possibility of any rebound effects across action dimensions; specifically, even if we add a single link in the reverse direction, it could undermine the uniqueness of the NE. For instance, assume the only non-zero entry in $G^{(21)}$ is $g^{(21)}_{ii}=1$. Then, simplifying the condition in Proposition~\ref{prop:twolayer-multilayer-general} for this game, if $(2 - [(I+G^{(1)})^{-1}]_{ii}\times g^{(12)}_{ii})<0$ (which can happen for any sufficiently large rebound inter-layer link $g^{(12)}_{ii}$), the multilayer may not have a unique NE. 

\subsubsection{Games of complements.} Lastly, we consider a case of games of complements, where bounding the extent of inter-layer interactions can guarantee NE uniqueness. 
\vspace{0.05in}
\begin{proposition}\label{prop:multilayer-BDD}
In a multilayer network where the layers are games of strategic complements, $I+G^{(l)}\in \mathcal{P}$, \revr{with $||A||_2 = \sup_{x\neq 0} \frac{||Ax||_2}{||x||_2}$ denoting the spectral matrix norm,} if
    \begin{align*}
    &||(I+G^{(1)})^{-1}||_2\times ||G^{(12)}||_2 < 1, \text{ and, }\\
    &||(I+G^{(2)})^{-1}||_2\times ||G^{(21)}||_2 < 1,
\end{align*}
the multilayer network game has a unique Nash equilibrium. 
\end{proposition}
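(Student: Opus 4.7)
The plan is to invoke Theorem~\ref{thm:single-uniqueness}, viewing the multilayer game as a $2N$-dimensional single-layer game with interdependency matrix $G^{\merge}$, so that uniqueness of the Nash equilibrium reduces to showing $I+G^{\merge}\in\mathcal{P}$. I will verify this by proving that every principal submatrix $(I+G^{\merge})_{J,J}$, for $J\subseteq\{1,\dots,2N\}$, has positive determinant, exploiting the $2\times 2$ block structure of $G^{\merge}$ together with a Schur-complement identity.

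Fix $J$ and split it as $J=J_1\sqcup J_2$ with $J_1$ indexing agents in the first layer and $J_2$ those in the second. If one of the $J_l$ is empty, the minor collapses to a principal submatrix of $I+G^{(l')}$ for the other layer $l'$, whose determinant is positive by the $\mathcal{P}$-matrix hypothesis on the layers. Otherwise, the minor retains the block form of $I+G^{\merge}$, and Schur's determinant identity yields $\det((I+G^{\merge})_{J,J})=\det((I+G^{(1)})_{J_1,J_1})\cdot\det(S_J)$, where $S_J=(I+G^{(2)})_{J_2,J_2}-G^{(21)}_{J_2,J_1}((I+G^{(1)})_{J_1,J_1})^{-1}G^{(12)}_{J_1,J_2}$ is the Schur complement of the top-left block. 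The first factor is positive, so the task reduces to proving $\det(S_J)>0$.

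Factoring out $(I+G^{(2)})_{J_2,J_2}$, I write $S_J=(I+G^{(2)})_{J_2,J_2}(I-X_J)$ with $X_J=((I+G^{(2)})_{J_2,J_2})^{-1}G^{(21)}_{J_2,J_1}((I+G^{(1)})_{J_1,J_1})^{-1}G^{(12)}_{J_1,J_2}$. The central step is to show $||X_J||_2<1$. Strategic complementarity forces $G^{(l)}\le 0$ off-diagonal, so $I+G^{(l)}$ is a $Z$-matrix; together with the $\mathcal{P}$-matrix hypothesis, it is therefore an $M$-matrix with $(I+G^{(l)})^{-1}\ge 0$ entrywise. Standard $M$-matrix comparison (via the Schur complement representation of the full inverse) yields $((I+G^{(l)})_{J_l,J_l})^{-1}\le ((I+G^{(l)})^{-1})_{J_l,J_l}$ entrywise; since the spectral norm is monotone both on non-negative matrices (by Perron--Frobenius) and under passage to principal submatrices, this gives $||((I+G^{(l)})_{J_l,J_l})^{-1}||_2\le ||(I+G^{(l)})^{-1}||_2$. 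Combined with the submatrix bound $||G^{(lk)}_{J_l,J_k}||_2\le ||G^{(lk)}||_2$ and sub-multiplicativity of the operator norm, the two hypotheses of the proposition yield $||X_J||_2<1$. Hence every eigenvalue of $I-X_J$ lies in the open disk of radius $1$ about $1$: real eigenvalues lie in $(0,2)$, complex ones come in conjugate pairs whose product is a positive real, so $\det(I-X_J)>0$. Multiplying by $\det((I+G^{(2)})_{J_2,J_2})>0$ gives $\det(S_J)>0$, hence $\det((I+G^{\merge})_{J,J})>0$ for every $J$, and Theorem~\ref{thm:single-uniqueness} concludes.

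The main obstacle is the $M$-matrix comparison step that lifts the full-layer spectral-norm hypotheses to arbitrary principal submatrices; this is where strategic complementarity is indispensable, because without $G^{(l)}\le 0$ the inverse $(I+G^{(l)})^{-1}$ may have mixed signs and the monotonicity of the spectral norm under entrywise ordering and under submatrix extraction collapses. Once that comparison is in hand, the remainder is essentially bookkeeping in block determinants.
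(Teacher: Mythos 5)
Your proof is correct, but it takes a genuinely different route from the paper's. The paper's own argument is a short appeal to the Feingold--Varga notion of strict block diagonal dominance: under the two norm conditions, $I+G^{\merge}$ is strictly block diagonally dominant with respect to $\|\cdot\|_2$ and has positive diagonal, so by the cited block-Gershgorin-type theorem its eigenvalues are positive and the matrix is then concluded to be positive definite, giving uniqueness. You instead verify the P-matrix property of $I+G^{\merge}$ minor by minor: split each principal index set across the two layers, apply the Schur determinant identity, and reduce positivity of the minor to $\|X_J\|_2<1$ for the coupling matrix $X_J$; the decisive ingredient is the M-matrix comparison $\bigl((I+G^{(l)})_{J_l,J_l}\bigr)^{-1}\le \bigl((I+G^{(l)})^{-1}\bigr)_{J_l,J_l}$ (entrywise), which lifts the full-matrix norm hypotheses to arbitrary principal submatrices, and this is exactly where the strategic-complements (Z-matrix) assumption enters --- the paper's sBDD argument never visibly uses it. What each approach buys: the paper's proof is shorter but leans on the cited theorem and on the somewhat loose step ``positive eigenvalues $\Rightarrow$ positive definite'' for a possibly non-symmetric matrix; yours is longer but self-contained up to standard M-matrix facts (a P-matrix that is a Z-matrix is a non-singular M-matrix with non-negative inverse, Schur complements of M-matrices are M-matrices, inverse monotonicity under the entrywise order, and Perron--Frobenius monotonicity of $\|\cdot\|_2$ on non-negative matrices --- all correct, but worth citing explicitly), it establishes the P-property directly, which is literally what Theorem~\ref{thm:single-uniqueness} requires even in the non-symmetric case, and it makes transparent why complementarity is assumed.
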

\vspace{0.05in}
\emph{Intuitive interpretation.} Proposition~\ref{prop:multilayer-BDD} tells us that, for a game of strategic complements, knowing that the within-layer links have relatively higher influence than the inter-layer links is sufficient to make the game have a unique Nash equilibrium.

For the special case of symmetric matrices, the conditions reduce to $\lambda_{\max}(G^{(lk)})<1+\lambda_{\min}(G^{(l)})$. 
In words, this means that if the inter-layer connectivities (as measured by $\lambda_{\max}(G^{(lk)})$) surpasses the ability of the intra-layer links to mute the ups and downs in that layer (as measured by $\lambda_{\min}(G^{(l)})$), then the multilayer game will not be guaranteed to have a unique equilibrium. 
This also highlights why it is harder to guarantee equilibrium uniqueness in multilayer network games compared to a multiplex network games, due to the additional rebounds from inter-layer spillovers. 

\section{Stability}
In this section, we show that the identified conditions for the uniqueness of the NE of (multiplex and multilayer) network games also imply the stability of that equilibrium. 

We first formally define stability. 
For an LCP, stability is defined as ``low'' sensitivity to small changes in the LCP$(M,\mathbf{b})$ \cite{CuDuong1985}. Denote the solution set of LCP$(M,\mathbf{b})$ by SOL$(M,\mathbf{b})$. The solution $\mathbf{x}^\star$ is said to be a \emph{stable} solution to the LCP if there are neighborhoods $V$ of $\mathbf{x}^\star$ and $U$ of $(M, \mathbf{b})$ such that:
\begin{itemize}
    \item For all $(\Bar{M}, \Bar{\mathbf{b}})\in U$ the set SOL$(\Bar{M}, \Bar{\mathbf{b}})\cap V$ is non empty.
    \item  $\{\sup{||\mathbf{y} - \mathbf{x}^\star||: \mathbf{y} \in \text{SOL}(\Bar{M}, \Bar{\mathbf{b}})\cap V}\} \to 0$ as $(\Bar{M}, \Bar{\mathbf{b}})$ approaches $(M,\mathbf{b})$.
\end{itemize}
If, in addition to the above conditions, the set SOL$(\Bar{M}, \Bar{\mathbf{b}})\cap V$ is a singleton, then the
solution $\mathbf{x}^\star$ is said to be \emph{strongly stable}. {In the context of network games, such perturbations may be indicative of (small) network dynamics (reflected in differences between $M$ and $\bar{M}$), or errors in modeling agents' utility functions (reflected in differences between $\mathbf{b}$ and $\bar{\mathbf{b}}$).}

The following proposition shows that the uniqueness of the NE of a single-layer network game also implies that that equilibrium will be strongly stable. 

\vspace{0.05in}
\begin{proposition}\label{prop:stability}
    If a network game has a unique Nash equilibrium, then this equilibrium is strongly stable. 
\end{proposition}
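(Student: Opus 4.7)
The plan is to reduce the claim to a classical sensitivity result for linear complementarity problems, leveraging the equivalence between network games and LCPs set up in Section~\ref{sec:single-layer-prelim}. By Theorem~\ref{thm:single-uniqueness} (and its analogues for multiplex and multilayer games, in which $I+G$ is replaced by $I+G^{\parallel}$ or $I+G^{\merge}$), uniqueness of the Nash equilibrium is equivalent to the relevant game matrix $M\in\{I+G,\,I+G^{\parallel},\,I+G^{\merge}\}$ being a P-matrix, and the NE coincides with the unique solution $\mathbf{x}^\star$ of LCP$(M,-\mathbf{q})$. Hence it suffices to show that when $M\in\mathcal{P}$, the solution $\mathbf{x}^\star$ of LCP$(M,-\mathbf{q})$ is strongly stable in the sense of the definition preceding the proposition.

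The first ingredient I would use is that the set $\mathcal{P}$ is an \emph{open} subset of $\mathbb{R}^{N\times N}$: membership in $\mathcal{P}$ is characterized by finitely many strict inequalities on continuous functions (the determinants of all principal minors), so any $\bar{M}$ sufficiently close to $M$ is again a P-matrix. Combined with the Samelson--Thrall--Wesler theorem (a matrix $M$ is a P-matrix if and only if LCP$(M,\mathbf{b})$ admits a unique solution for every $\mathbf{b}$), this already gives that SOL$(\bar{M},\bar{\mathbf{b}})$ is a nonempty singleton for every $(\bar{M},\bar{\mathbf{b}})$ in an open neighborhood of $(M,-\mathbf{q})$, which handles the singleton requirement of strong stability.

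The remaining ingredient is that this unique solution must lie in an arbitrary prescribed neighborhood $V$ of $\mathbf{x}^\star$ and must converge to $\mathbf{x}^\star$ as $(\bar{M},\bar{\mathbf{b}})\to(M,-\mathbf{q})$. For this I would invoke the classical LCP sensitivity result (cf.~\cite{CuDuong1985} and the Cottle--Pang--Stone monograph): when $M\in\mathcal{P}$, the unique solution map $(M,\mathbf{b})\mapsto\mathbf{z}^\star(M,\mathbf{b})$ of LCP$(M,\mathbf{b})$ is (locally Lipschitz) continuous in the data. Applying this continuity at $(M,-\mathbf{q})$, for any neighborhood $V$ of $\mathbf{x}^\star$ I can shrink the neighborhood $U$ of $(M,-\mathbf{q})$ given by openness of $\mathcal{P}$ so that $\mathbf{z}^\star(\bar{M},\bar{\mathbf{b}})\in V$ for all $(\bar{M},\bar{\mathbf{b}})\in U$, and the $\sup$ in the second bullet of the definition vanishes as $(\bar{M},\bar{\mathbf{b}})\to(M,-\mathbf{q})$. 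Together with the singleton property, this gives strong stability.

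The main obstacle is the continuity step: openness of $\mathcal{P}$ and the LCP reduction are essentially bookkeeping, but continuity of the LCP solution under simultaneous perturbation of both $M$ and $\mathbf{b}$ is the nontrivial classical fact that must be cited correctly. I would lean on \cite{CuDuong1985} directly (as the paper already cites it when defining stability), making sure to note that the P-matrix assumption, being preserved on a neighborhood of $M$, justifies applying the single-valued continuity result uniformly in the perturbation.
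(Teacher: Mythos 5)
Your proposal is correct in substance, but it takes a genuinely different route from the paper. The paper does not argue from the definition of strong stability at all: it invokes the algebraic characterization in \cite{CuDuong1985}, namely that the solution is strongly stable if and only if $M_{JJ}$ is nonsingular and the Schur complement $M_{KK}-M_{KJ}M_{JJ}^{-1}M_{JK}$ is a P-matrix (with $J,K$ the supported and degenerate index sets at the equilibrium), and then verifies both conditions using closure properties of the class $\mathcal{P}$ under permutation similarity, taking principal submatrices, and taking Schur complements. You instead verify the definition directly: openness of $\mathcal{P}$ plus Samelson--Thrall--Wesler gives a nonempty singleton solution set for all nearby data, and joint continuity of the P-matrix LCP solution map in $(M,\mathbf{b})$ gives the localization and convergence requirements. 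Both arguments are sound; what the paper's route buys is that every step reduces to finite determinantal facts and the exact criterion in the reference it already cites, whereas your route is more conceptual but leans on a parametric-continuity theorem that is essentially as strong as the proposition itself (some treatments, e.g.\ Cottle--Pang--Stone, state ``P-matrix implies strongly stable solution'' as the theorem, and \cite{CuDuong1985} is really the source of the algebraic criterion rather than of a continuity statement). If you want your version to be self-contained rather than a near-citation, the continuity step can be proved directly: for data $(\bar M_k,\bar{\mathbf{b}}_k)\to(M,-\mathbf{q})$ with $\bar M_k\in\mathcal{P}$, the solutions $\mathbf{z}_k$ are bounded (otherwise a normalized limit $d\geq 0$, $d\neq 0$ would satisfy $Md\geq 0$ and $d^{T}Md=0$, contradicting the sign-nonreversal property of P-matrices), and any accumulation point solves LCP$(M,-\mathbf{q})$, hence equals $\mathbf{x}^\star$ by uniqueness.
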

\vspace{0.05in}

\emph{Remark.} For the special case of interior Nash equilibria (i.e., those in which all agents exert positive effort), the reverse is also true: interior solutions are stable if and only if they are unique. Intuitively, this is because unstable equilibria are caused by free-riding agents (i.e., those exhibiting zero effort) who may become active agents under perturbations of the game parameters, causing instability and multiplicity in the new emerging outcomes depending on how the resulting spillovers of their activity alter other agents' efforts. 

Note also that this proposition only states that if a network game has a unique Nash equilibrium, then this equilibrium is also strongly stable. 
However, the converse is not true, in that a network game can have multiple strongly stable Nash equilibria, as we show through the following examples. 

\vspace{0.05in}
\begin{example}[Stability in multiplex network games]
    Consider a multiplex network game with $G^{(1)} = \begin{psmallmatrix}
    0 & 1/2a & 1/2a \\ 
    1/2a & 0 & 1/2b \\
    1/2a & 3/2b & 0 \end{psmallmatrix}$ and $G^{(2)} = \begin{psmallmatrix}
    0 & 1/2a & 1/2a \\ 
    1/2a & 0 & 1/2b \\
    1/2a & -1/2b & 0 \end{psmallmatrix}$, This will lead to $I+G^\parallel=\begin{psmallmatrix}
    1 & 1/a & 1/a \\ 
    1/a & 1 & 1/b \\
    1/a & 1/b & 1 \end{psmallmatrix}$. Let us choose $b_i(.)$ and $c_i$ such that $\mathbf{q}= (-8, 3, -3)^T$. In this game, if we set $(a, b)=(2, 1/2)$, the first layer has \emph{non-unique and unstable} NE, the second layer has a \emph{unique and stable} NE, and the multiplex \emph{has non-unique, but stable} NE $\mathbf{x^\star}=(8, 0, 0)^T$ and $\mathbf{x^\star}=(16/3, 0, 5/3)^T$. Notably, in this example, connecting the two layers leads to stable equilibria, despite one of the layers not having stable equilibria by itself.   
\end{example}

\vspace{0.05in}
\begin{example}[Stability in multilayer network games]
    Consider a multilayer network with supra-adjacency matrix:
        $I+G^\merge = \begin{psmallmatrix}
            1 & 2 & 1 & 0 \\
            3 & 1 & 0 & 1 \\
            1 & 0 & 1 & 3/5 \\
            0 & 1 & 4/5 & 1
        \end{psmallmatrix}$
    with the top left $2\times 2$ sub-matrix being $G^{(1)}$ and the bottom right $2\times 2$ sub-matrix being $G^{(2)}$. Choose $b_i(.)$ and $c_i$ such that $\mathbf{q}=(-3, -6, -1, -0.5)^T$. Then, layer 1, by itself, has \emph{non-unique and unstable} NE, while layer 2 has \emph{unique and stable} NE. The multilayer network on the other hand, has \emph{non-unique equilibria} $\mathbf{x^\star}=(0.6, 1.8, 0, 0)^T$ and $\mathbf{x^\star}=(0, 6, 1, 0)^T$, with the former being \emph{stable} and the latter being \emph{unstable}. This example shows that having one layer with non-unique and unstable Nash equilibria will not necessarily ruin the stability of the solutions of the resulting multilayer game (although it always undermines the uniqueness of the NE). 
\end{example}

\section{Numerical Experiments}\label{sec:simulations}

\begin{figure*}[ht]
    \centering
    \vspace{-0.32in}
    \begin{subfigure}[t]{0.24\textwidth}
        \includegraphics[width=\textwidth]{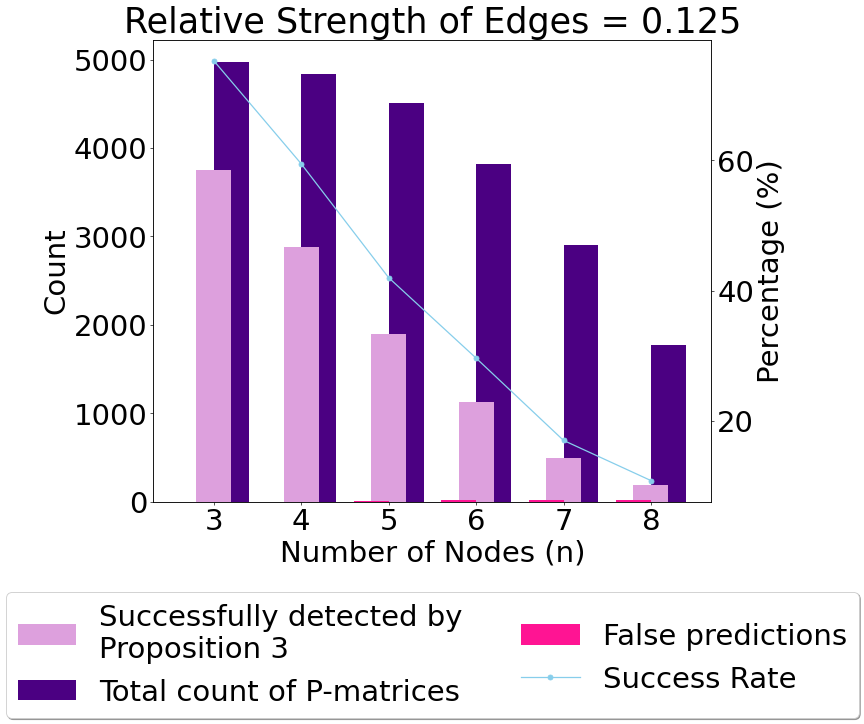}
            \caption{Proposition~\ref{prop:multiplex-pert-pos} for directed networks (for fixed \revr{strength ($s$)})}
        \label{fig:multiplex-prop-pos}
    \end{subfigure}
    \begin{subfigure}[t]{0.24\textwidth}
        \includegraphics[width=\textwidth]{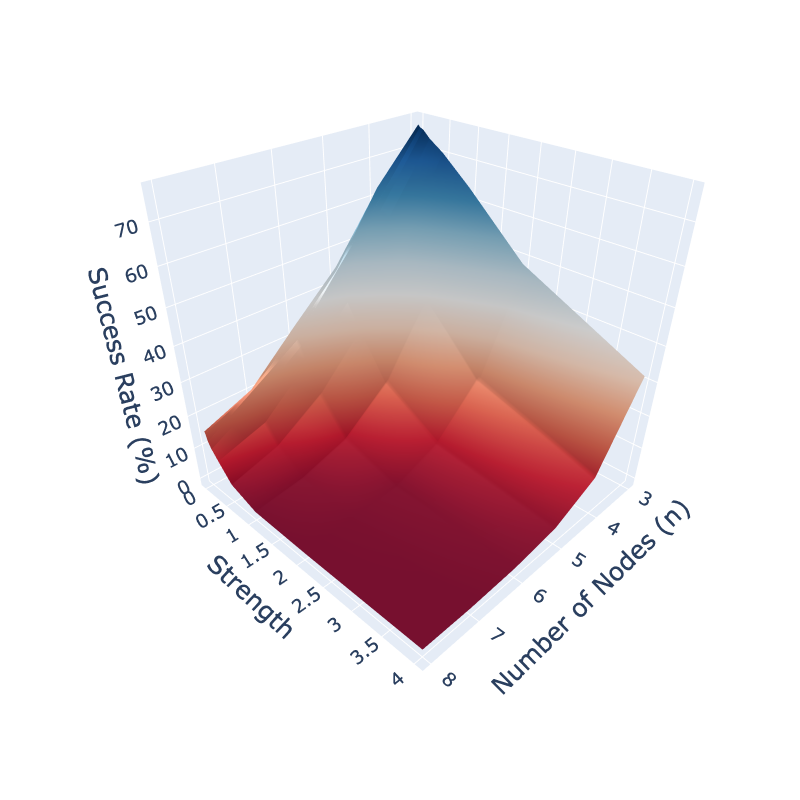}
        \caption{Proposition~\ref{prop:multiplex-pert-pos} for directed networks (when varying \revr{strength ($s$)})}
        \label{fig:prop-pos-var-s}
    \end{subfigure}
    \hspace{0.01in}
    \begin{subfigure}[t]{0.24\textwidth}
        \includegraphics[width=\textwidth]{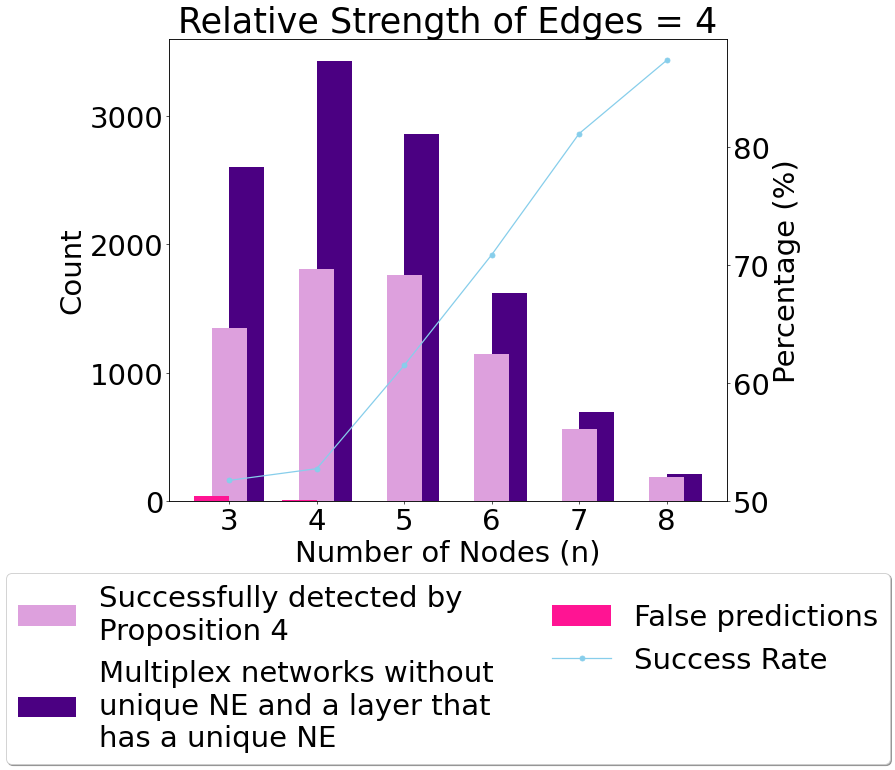}
            \caption{Proposition~\ref{prop:lambda-min-fails} for directed networks (for fixed \revr{strength ($s$)})}
        \label{fig:multiplex-prop-neg}
    \end{subfigure}
    \begin{subfigure}[t]{0.24\textwidth}
        \includegraphics[width=\textwidth]{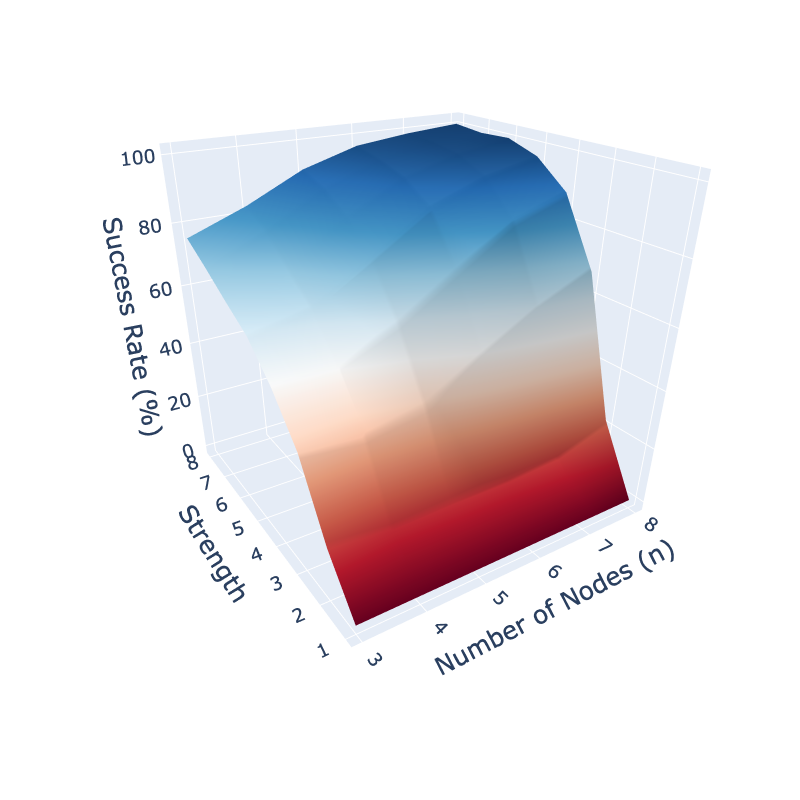}
            \caption{Proposition~\ref{prop:lambda-min-fails} for directed networks (when varying \revr{strength ($s$)})}
        \label{fig:prop-neg-var-s}
    \end{subfigure}
    \caption{\revr{By using the results in Proposition~\ref{prop:multiplex-pert-pos} and Proposition~\ref{prop:lambda-min-fails} in a directed setting we observe that }Proposition~\ref{prop:multiplex-pert-pos} is most effective at providing guarantees for the uniqueness of Nash equilibria in smaller multiplex networks (\ref{fig:multiplex-prop-pos} and \ref{fig:prop-pos-var-s}) and Proposition~\ref{prop:lambda-min-fails} is most effective at refuting the guarantees for the uniqueness of Nash equilibria in larger multiplex networks (\ref{fig:multiplex-prop-neg} and \ref{fig:prop-neg-var-s}). }
    \label{fig:prop4-plots-all}
    \vspace{-0.1in}
\end{figure*}

In this section, we present numerical experiments on randomly generated networks to verify our propositions, and also evaluate them when the assumptions in the propositions are not satisfied. We also provide two experiments based on real-world data (the Copenhagen Networks Study \cite{Sapiezynski2019}, and data Friendfeed and Twitter social network interactions \cite{Magnani2011MLmodel}, both multiplex networks) in \revr{the online appendix~\cite{online-appendix}}. 

\subsubsection{Multiplex Networks}
Recall that for multiplex network games with {undirected} underlying layers, Proposition \ref{prop:multiplex-pert-pos} identifies a condition under which the multiplex will have a unique NE, while Proposition \ref{prop:lambda-min-fails} identifies a condition under which the game is not guaranteed to have a unique NE. We now ask whether these conditions can still be informative when applied to \emph{{directed}} networks. {We note that in directed networks with asymmetric adjacency matrices, the eigenvalues can be complex numbers. When this happens, we use the real part of the eigenvalues to order the eigenvalues and assess the conditions of Propositions~\ref{prop:multiplex-pert-pos} and \ref{prop:lambda-min-fails}.}


For our experiments, we consider networks of size (number of agents) $N\in[3,8]$. For each $N$, we generate 5000 instances of random (directed) networks by drawing the edge weight of the first layer from a uniform distribution with range $(-1, 1)$, and those of the second layer from a uniform distribution with range $(-s, s)$ where $s$ represents the strength of links in layer 2 relative to layer 1. We set $\kappa = 0.5$.  Of these instances, we only keep those in which layer 1 has a unique NE. {We then assess Propositions~\ref{prop:multiplex-pert-pos} or \ref{prop:lambda-min-fails} for each instance. We define an outcome to be a ``false detection'' when the inequalities in Proposition ~\ref{prop:multiplex-pert-pos} (resp. Proposition~\ref{prop:lambda-min-fails}) are satisfied (i.e., they predict that a NE is guaranteed (resp. not guaranteed) to be unique), but that the (directed) multiplex network does not follow this prediction.}

We first look at instances where the second layer has a weaker strength relative to the first layer (setting $s=0.125$). This way, it is more likely that the second layer is relatively sparse compared to the first layer, so that it can not undermine the uniqueness of the NE; intuitively, this is the condition that Proposition~\ref{prop:multiplex-pert-pos} assesses in order to guarantee uniqueness of the NE of an {undirected}  multiplex game. In Figure~\ref{fig:multiplex-prop-pos},  we observe that Proposition~\ref{prop:multiplex-pert-pos} can also often successfully identify if a \emph{{directed}} multiplex game has a unique NE in \emph{smaller} size networks. That is, this proposition successfully assesses when (a small number of) weak links from a newly added layer do not substantially alter the original layer's equilibrium state. 

We then look at instances where the second layer has stronger interactions relative to the first layer (setting $s=4$). This way, it is more likely that the ups-and-downs caused by a strong second layer can not be muted by the first layer, causing non-uniqueness of NE; intuitively, this is the condition identified by Proposition~\ref{prop:lambda-min-fails} for {undirected} multiplex games. In Figure~\ref{fig:multiplex-prop-neg}, we observe that Proposition~\ref{prop:lambda-min-fails} is also adept at identifying when \emph{{directed}} multiplex games lack a unique NE in \emph{larger} networks. That is, this proposition successfully assesses when (a large number of) stronger links from a new layer significantly impact the original layer's equilibrium state. 

{We also take a closer look at the failure cases of Propositions~\ref{prop:multiplex-pert-pos} and \ref{prop:lambda-min-fails}. To this end, we extracted two of the counter examples (false detection cases) from our numerical experiments (one for each of Proposition 3 and Proposition 4).

\begin{example}\label{ex:counter-3}
    Consider a multiplex network with the layers' adjacency matrices are $G^{(1)} = \begin{psmallmatrix}
         0 & -0.20  & 0.20 & 0.39 & 0.16\\
         0.91 & 0 & -0.85 & -0.88 & 0.89\\
         -0.95 & -0.32 & 0 & -0.65 & 0.68\\
         -0.37 & -0.93 & -0.46 & 0 & 0.29\\
         0.57 & -0.02 & -0.73 & -0.15 & 0
    \end{psmallmatrix}$ and $G^{(2)} = \begin{psmallmatrix}
         0 & 0.11  & -0.08 & 0.05 & -0.09\\
         -0.08 & 0 & -0.07 & -0.06 & 0\\
         -0.05 & -0.10 & 0 & -0.10 & 0.08\\
         0.06 & -0.09 & 0.12 & 0 & 0.07\\
         -0.02 & -0.05 & -0.09 & 0.07 & 0
    \end{psmallmatrix}$, and $\kappa=0.75$. For these,  $\lambda_{\min}(G^{(1)})=-0.62$ and $\lambda_{\max}(G^{(2)})=0.05+0.02j$. In this case the inequality in Proposition~\ref{prop:multiplex-pert-pos} is satisfied;  however, $I+G^{\parallel}\notin \mathcal{P}$, which means that we cannot guarantee the uniqueness of the multiplex network game.
\end{example}

\begin{example}\label{ex:counter-4}
    Consider a multiplex network with $G^{(1)} = \begin{psmallmatrix}
         0 & 0.92  & -0.78\\
         -0.96 & 0 & -0.93\\
         0.95 & 0.79 & 0
    \end{psmallmatrix}$, $G^{(2)} = \begin{psmallmatrix}
         0 & -0.09  & -0.08\\
         -2.40 & 0 & -3.24\\
         -1.85 & -1.71 & 0
    \end{psmallmatrix}$, and $\kappa=0.5$. For these, we have $\lambda_{\max}(G^{(1)})=0.05+1.54j$ and $\lambda_{\min}(G^{(2)})=-2.51$. The inequality in Proposition 4 (generalized) will be satisfied since $2.51 \ge 2(1+0.05)$, claiming that the uniqueness of NE is not guaranteed.  However, the constructed multiplex from these two matrices is a P-matrix, and therefore the equilibrium uniqueness is, in fact, guaranteed. 
\end{example}}
{Intuitively, in Example~\ref{ex:counter-4}, the connectivity of layer 1 is in fact high enough to mute the ups-and-downs introduced by layer 2, yet the real part of the largest eigenvalue of layer 1 does not successfully capture this high connectivity. A similar discussion can be made for Example~\ref{ex:counter-3}. These suggest that extensions of  Propositions~\ref{prop:multiplex-pert-pos} and \ref{prop:lambda-min-fails} to directed networks should go beyond the real part of complex eigenvalues.}

{Finally, we conduct similar experiments for different values of $s$: 
As seen in Figure~\ref{fig:prop-pos-var-s}, Proposition~\ref{prop:multiplex-pert-pos} generalizes more reliably when both the number of nodes and the relative strength of links in the second layer are small. This is explained by, and supports, the intuition on Proposition~\ref{prop:multiplex-pert-pos}: this proposition is most informative when the effect of layer 2 can be viewed as a (small) disturbance in layer 1, so that the multilayer network mostly resembles layer 1. 

We also see from Figure~\ref{fig:prop-neg-var-s} that a generalization of Proposition~\ref{prop:lambda-min-fails} to directed networks is more successful for large $s$ and large $N$; this is consistent with the intuition of Proposition~\ref{prop:lambda-min-fails} that the first layer may not be able to mute the ups-and-downs caused by a strongly connected second layer (in the sense of both having more links and stronger connections).}

\subsubsection{Multilayer Networks} We next run experiments on random instances of multilayer network games as follows. We consider networks of size $N=5$.  We first chose 6 first layer networks with a unique Nash equilibrium, and edge weights drawn from $U(-1,1)$. We then generated 1000 second layer networks with edge weights drawn randomly from $U(-1,1)$, and paired them with each of the 6 first layers, for a total of 6000 instances. In Figure~\ref{fig:nume-exp-multilayer}, we sort these 6000 instances first by the minimum eigenvalue of the first layer, and then by the minimum eigenvalue of the second layer. Pink (lighter) colors indicate that the instance is guaranteed to have a unique NE. 

For each instance, we further consider 4 different strengths for the between-layer links, by drawing random weights from the following ranges: $(-1,1)$ for ``normal'' and ``one-way'' inter-layer links, $(-0.5,0.5)$ for ``weak'' inter-layer links, and $(-0.05,0.05)$ for ``very weak'' inter-layer links.

\begin{figure}[h]
    \centering
    \includegraphics[width=0.8\columnwidth]{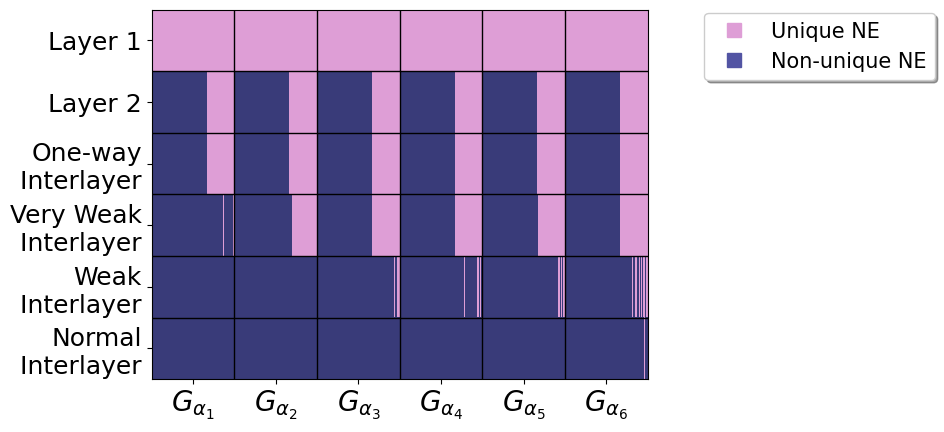}
    \caption{Uniqueness of equilibria of multilayer network games (in pink) is increasingly undermined as the strength of the interlayer interactions increases \revr{(left to right)}.}
    \label{fig:nume-exp-multilayer}
    \vspace{-0.1in}
\end{figure}

We first note the similarities between ``one-way" and ``very weak'' inter-layer links. Proposition~\ref{prop:multilayer-oneway} noted that one-way interactions can guarantee that joining two layers with unique NE will result in a multilayer game with unique NE; Figure~\ref{fig:nume-exp-multilayer} suggests that this is true as long as the inter-layer interactions are sufficiently weak, as well. However, as the strength of inter-layer interactions grow, we see fewer multilayer networks with a unique NE, as illustrated by the decrease in pink (lighter color) lines as we move down in Figure~\ref{fig:nume-exp-multilayer}. That said, we also note that as the minimum eigenvalue of the layers grow (moving to the right in Figure~\ref{fig:nume-exp-multilayer}), we are seeing more resistance from the layers against having the uniqueness of NE ruined by inter-layer interactions. This can be seen as the growing ability of the layers in damping the fluctuations introduced due to the rebounds between the action dimensions.%
\section{Conclusion}\label{sec:conclusion}
We have proposed two new classes of multiplex and multilayer network games to study networked strategic interactions when agents are affected by different modalities of information and operate over multiple action dimensions, respectively. These models enabled us to explore how the properties of the constituent subnetworks undermine or support the uniqueness and stability of the Nash equilibria of multiplex/multilayer games. At a technical level, answering these questions for multiplex (resp. multilayer) games required us to understand how the determinant (and lowest eigenvalues) of the sum (resp. block) adjacency matrix of the game relates to the determinants (and eigenvalues) of the layers' matrices; neither the determinant nor the eigenvalues of the sum/block matrices have such closed-form expressions in general. Our results have leveraged existing inequalities/bounds (e.g., on matrix perturbations, Weyl’s inequality) to find (sufficient) negative results, and provided positive answers for special matrix subclasses. 

Our findings shed light on the reasons for the fragility of the uniqueness and stability of equilibria when agents interact over networks of networks, and can guide potential interventions. For instance, we noted that the connectivity of one layer (as characterized by its largest eigenvalue) needs to be high enough to mute the rebounds introduced by another layer (as characterized by its lowest eigenvalue) as a necessary condition for equilibrium uniqueness; this suggests interventions in which a policy maker or network designer attempts to change either the connectivity or the bipartiteness of one of two interdependent networks to induce unique or stable equilibria. 

{We conclude with some potential directions of future work. First, our models and findings in this paper consider \emph{static} networks. Our results may be applicable to dynamic networks for which spectral properties can be appropriately bounded, with the provided bounds fitting our (inequality) conditions for equilibrium uniqueness (e.g., \cite{Chen2007} shows that the P-matrix property will continue to hold under sufficiently small perturbations to the adjacency matrix). Formally establishing such results, as well as establishing stronger uniqueness conditions for uniqueness (or lack thereof) for dynamic multilayer/multiplex networks, remain as directions of future work. Other} future directions of research include sharpening our stability results, combining the multiplex and multilayer network models, {providing uniqueness guarantees that leverage network structural properties beyond spectral properties of the adjacency matrix}, and analyzing multiplex/multilayer network games with non-linear best-replies.

\bibliographystyle{ieeetr}
\bibliography{multilayer-ne}

\clearpage
\appendix

\subsection{{Proofs}}\label{app:proofs}

\subsubsection{Proof of Proposition~\ref{prop:pair-failure}}
Consider any pair of agents $i$ and $j$. The principal minor corresponding to this pair in the multiplex network is $M_{ij}:=\begin{psmallmatrix}
    1 & \kappa g^{(1)}_{ij} + (1-\kappa) g^{(2)}_{ij} \\ \kappa g^{(1)}_{ji} + (1-\kappa) g^{(2)}_{ji} & 1
\end{psmallmatrix}$. A necessary condition for $I+G^\parallel$ to be a P-matrix is for this principal minor to have a positive determinant. Writing this condition in terms of the determinants of the pair of agents' corresponding principal minors $M^\alpha_{ij}$ and $M^\beta_{ij}$ in each layer, setting it to be non-positive, yields the proposition's statement. 

\subsubsection{Proof of Proposition~\ref{prop:sum-of-p-matrices}}
\emph{Part (1).} This is true because a symmetric matrix is a P-matrix if and only if it is positive definite \cite{murty1988linear}, and the sum of positive definite matrices is positive definite. 
\emph{Part (2).} By the triangle inequality, $|\kappa g^{(1)}_{ij} + (1-\kappa) g^{(2)}_{ij}|<\kappa |g^{(1)}_{ij}| + (1-\kappa)|g^{(2)}_{ij}|$. Therefore, we have $\sum_{j\neq i}|\kappa g^{(1)}_{ij} + (1-\kappa) g^{(2)}_{ij}|<1, ~\forall i$, and as such, $I+G^\parallel$ is strictly row diagonally dominant as well. by the Gershgorin circle theorem, all real eigenvalues of a strictly row diagonally dominant matrix with positive diagonal elements are positive. The determinant of a matrix is the product of its eigenvalues, and as for real matrices, the complex eigenvalues appear in pairs with their conjugates, $I+G^\parallel$ has a positive determinant. The same argument holds for all principal minors of $I+G^\parallel$. Therefore, $I+G^\parallel\in\mathcal{P}$.  
\emph{Part (3).} A B-matrix is a subclass of P-matrices~\cite{pena2001class}. It is easy to check that given that $I+G^{(1)}$ and $I+G^{(2)}$ are B-matrices, $I+G^\parallel$ also satisfies the conditions of a B-matrix, and is therefore a P-matrix.

\subsubsection{Proof of Proposition~\ref{prop:multiplex-pert-pos}}
From \cite{Chen2007}, we know that if $M$ is a P-matrix, every matrix $A\in\mathcal{M}\coloneqq\{A \,|\, \beta_p(M)\lVert M-A\rVert_p < 1\}$ is also a P-matrix, where $\beta_p(M)$ is:
\begin{align}
    \beta_p(M)=\max_{d\in[0,1]}\lVert(I-D+MD)^{-1}D)\rVert_p
\end{align}
and $\lVert . \rVert_p$ is the matrix norm induced by the vector norm for $p\ge 1$, and $D$ is diag($d_i$) such that $0\le d_i\le 1$ for all $i$. Intuitively, the matrices $A\in\mathcal{M}$ can be viewed as sufficiently small perturbations of matrix $M$. 
By setting $M=\kappa(I+G^{(1)})$ and $A=I+G^\parallel$, we see that for a multiplex network game where the first layer has a unique NE (i.e., $I+G^{(1)}\in\mathcal{P}$), the multiplex will have a unique NE (i.e., $I+G^\parallel \in\mathcal{P}$), if:
\begin{align}\label{eq:perturbation-chen}
    \beta_p(\,\kappa(I+G^{(1)})\, )\times(1-\kappa)\lVert (I+G^{(2)})\rVert_p < 1.
\end{align}

For a symmetric positive definite matrix $M$, we know $\beta_2(M) = \lVert M^{-1}\rVert_2$, so we can simplify \eqref{eq:perturbation-chen}:
\begin{align}\label{eq:pert-pos}
    \frac{1}{\kappa}\lVert (I+G^{(1)})^{-1}\rVert_2 \times (1-\kappa)\lambda_{\max}(I+G^{(2)})<1~.
\end{align}
    
Now, as $I+G^{(1)}$ does not have a negative eigenvalue, we can write $\lVert(I+G^{(1)})^{-1}\rVert_2 = \frac{1}{\lambda_{\min}(I+G^{(1)})}$. Therefore, \eqref{eq:pert-pos} will reduce to the condition in the proposition statement. 

\subsubsection{Proof of Proposition~\ref{prop:lambda-min-fails}}

Since by Corollary~\ref{cor:single-uniqueness-symmetric} we only require a bound on the minimum eigenvalue of $G^\parallel$, we consider \emph{Weyl's inequalities} \cite{tao-blog}, which provide an ordering of the eigenvalues of the sum of two symmetric matrices. Formally, let $H=H_1+H_2$, $H_1$, and $H_2$ be $n\times n$ Hermitian matrices, with their respective eigenvalues $\lambda_i$ indexed in decreasing order, i.e., $\lambda_{\max}=\lambda_1\geq \lambda_2\geq \ldots \geq \lambda_n=\lambda_{\min}$.  
Then, the following inequalities hold:
\begin{align*}
    &\lambda_j(H_1)+\lambda_k(H_2)\le \lambda_i(H) \le \lambda_r(H_1) + \lambda_s(H_2)\notag\\
    &\text{s.t.} \hspace{0.2in} j+k-n\ge i \ge r+s-1~.
\end{align*}
 
At $i=n$, for $G^\parallel=\kappa G^{(1)} + (1-\kappa) G^{(2)}$, we have: 
\begin{align}
    \lambda_{\min} (G^\parallel) \leq \min_{\substack{r, s, \text{ s.t. } \\ r+s-1\leq n}} (\kappa\lambda_r(G^{(1)}) + (1-\kappa)\lambda_s(G^{(2)}))~.
    \label{eq:weylineq-min}
\end{align}

We now note that $tr(G^\parallel)=0$ ({as $g^{(1)}_{ii}=g^{(2)}_{ii}=0$}), and therefore $\lambda_{\min}(G^\parallel)<0$. As a result, $|\lambda_{\min}(G^\parallel)|\leq 1$ is the same as identifying conditions under which $\lambda_{\min}(G^\parallel)\leq -1$. Consider the term in the upper bound of \eqref{eq:weylineq-min} attained at $\{r=n,s=1\}$: 
\begin{align*}
    \lambda_{\min}(G^\parallel)\le \kappa\lambda_{\max}(G^{(1)})+(1-\kappa)\lambda_{\min}(G^{(2)})~.
\end{align*}
If the upper bound above is less than $-1$, then $\lambda_{\min}(G^\parallel)<-1$, and the multiplex will not be guaranteed to have a unique NE. Re-arranging the inequality, and noting that $\lambda_{\min}(G^{(2)})<0$ and $\lambda_{\max}(G^{(1)})>0$ (as the traces for both of these matrices, and therefore the sum of their eigenvalues, is equal to zero), leads to the statement of the proposition. 

\subsubsection{Proof of Proposition~\ref{prop:twolayer-multilayer-general}}

For the first case, if we want $I+G^\merge$ to be a P-matrix, we need both $I+G_{1}$ and $I+G_{2}$ to be P-matrices as well, since these are sub-matrices of $I+G^\merge$. 

For the second case, note that we need the determinant of $I+G^\merge$ to be positive for it to be a P-matrix as a necessary condition. This can be written as: \begin{align}\label{eq:blockdeterminant}
&\det(I+G^\merge)=\det(I+G^{(1)})\times\notag\\
    &\qquad \det(I+G^{(2)}-G^{(21)}(I+G^{(1)})^{-1}G^{(12)})
\end{align}
where the second term is the Schur complement of $G^\merge$ \cite{boyd2004convex} (note that the condition for non-singularity to apply this result is satisfied, as $I+G^{(1)}$ is a P-matrix). 

For \eqref{eq:blockdeterminant} to be positive (and noting that $\det(I+G^{(1)})>0$ since we have assumed this layer has a unique NE), we need the second term on the RHS, the determinant of the Schur complement, to be positive. The same can be written for $I+G^{(2)}$, leading to the conditions in the second case.

\subsubsection{Proof of Proposition~\ref{prop:multilayer-oneway}}
For any chosen subset $\gamma$ of nodes, we can divide it into two disjoint subsets, such that $\gamma=\gamma_1\cup\gamma_2$, and 
with $\gamma_1$ containing the indices in $\{1, \ldots, N\}$, and $\gamma_2$ containing the indices in $\{N+1, ..., 2N\}$. This way, we can write any chosen sub-matrix of $I+G^\merge$ as follows:
    \begin{align}
        (I+G^\merge)_{[\gamma_1;\gamma_2]} = \begin{psmallmatrix}
          (I+G^{(1)})_{[\gamma_1]} & G^{(12)}_{[\gamma_1;\gamma_2]} \\
            0 & (I+G^{(2)})_{[\gamma_2]}
        \end{psmallmatrix}
    \end{align}
where for a matrix $A$, $A_{[\gamma]}$ is the square sub-matrix of $A$ with the rows and columns indexed in the set $\gamma$, and $A_{[\gamma_1;\gamma_2]}$ is the square sub-matrix of $A$ where row indices are in set $\gamma_1$ and column indices are in set $\gamma_2$. 

These sub-matrices' determinants are given by $\det((I+G^{(1)})_{[\gamma_1]})\times \det((I+G^{(2)})_{[\gamma_2]})$, which will be positive provided the layers are such that $I+G^{(l)}\in \mathcal{P}$.

\subsubsection{Proof of Proposition~\ref{prop:multilayer-BDD}}
We begin with a definition \cite{Feingold1962BlockDD}: 
a block matrix $G$, with nonsingular diagonal sub-matrices $G^{(i)}$, is strictly block diagonally dominant (sBDD) (with respect to norm $||.||_2$) if 
\begin{align}\label{eq:BDD}
(||G^{(l)^{-1}}||_2)^{-1} <  ||G^{(lk)}||_2, \forall \{l, k\neq l\} \in\{1,2\}. 
\end{align}
From \cite{Feingold1962BlockDD}, we also know that sBDD matrices with positive real diagonal entries have positive eigenvalues. Now, under the conditions in the proposition, $I+G^\merge$ is an sBDD matrix, and it also has positive real diagonals. Therefore, under the conditions of the proposition, $I+G^\merge$ is a positive definite matrix, and the resulting network game has a unique NE.

\subsubsection{Proof of Proposition~\ref{prop:stability}}
For an LCP$(M,\mathbf{b})$, with solution $\mathbf{z}^*=M\mathbf{x}^*+\mathbf{b}$, define:
\begin{align*}
    J \coloneqq \{i\; |\; x_i^*>0 , z_i^*=0\} \\
    K \coloneqq \{i\; |\; x_i^*=0 , z_i^*=0\} \\
    L \coloneqq \{i\; |\; x_i^*=0 , z_i^*>0\}
\end{align*}
Accordingly, define $M_r := \begin{psmallmatrix}
    M_{JJ} & M_{JK} \\
    M_{KJ} & M_{KK}
\end{psmallmatrix}$. This block matrix is a rearrangement of our original matrix $M$: 
$
    M_r = \prod_{i,j\in J,K}P_{ij} M P_{ij}, 
    $
where $P_{ij}$ are permutation matrices.

From \cite{CuDuong1985}, we know the LCP$(M, \mathbf{b})$ is strongly stable at $\mathbf{x}^*$ if and only if the following conditions hold: (1) $M_{JJ}$ is nonsingular, and (2) The Schur complement $N\coloneqq M_{KK}-M_{KJ}M_{JJ}^{-1}M_{JK}$ is a P-matrix. 

First, note that if a single-layer network game has a unique Nash equilibrium, i.e., if $I+G$ is a P-matrix, then the rearrangement of $I+G$ as $I+G_r = \begin{psmallmatrix}
    I+G_{JJ} & G_{JK} \\
    G_{KJ} & I+G_{KK}
\end{psmallmatrix}$ will also be a P-matrix (this is because for every permutation matrix $Q$ and P-matrix $A$, the matrix $QAQ^T$ is also a P-matrix, which is a direct consequence of determinantal properties). Now, first note that since $I+G_r$ is a P-matrix, and $I+G_{JJ}$ is a square sub-matrix of $I+G_r$, we have $\det(I+G_{JJ})>0$, so that $I+G_{JJ}$ is non-singular. We further know that the Schur complement of a P-matrix is also a P-matrix \cite{horn2013matrix}. This completes the proof. 

\subsection{Extension to Networks with $M>2$ Layers}\label{appendix:m-layers}
We can extend Proposition~\ref{prop:sum-of-p-matrices} on the uniqueness of NE of multiplex networks as follows:
\begin{corollary}\label{prop:sum-of-p-matrices-m-layers}
    Under any of the following, a multiplex network game on $M$ layers will have a unique Nash equilibrium: 
    \begin{enumerate}
        \item $G^{(l)}$ are \emph{symmetric} {P-matrices} for all $l\in\{1, \ldots M\}$.  
        \item $I+G^{(l)}$ are \emph{strictly row diagonally dominant}, i.e.,  $\sum_{j\neq i} |g^{(l)}_{ij}| < 1, \forall l\in\{1, \ldots M\}$. 
        \item $I+G^{(l)}$ are \emph{B-matrices}, i.e.,  $1 + \sum_{j} g^{(l)}_{ij} > 0$ and $\frac{1}{N}(1+\sum_{j\neq k} g^{(l)}_{ij}) > g^{(l)}_{ik}$, $\forall i, \forall k\neq i, l\in\{1, \ldots, M\}$. 
    \end{enumerate}
\end{corollary}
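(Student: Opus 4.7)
The plan is to observe that each of the three matrix classes invoked in the corollary is closed under convex combinations, and then note that the $M$-layer multiplex game is played on the interdependency matrix $G^\parallel = \sum_{l=1}^M \kappa_l G^{(l)}$ with $\kappa_l \in [0,1]$ and $\sum_{l=1}^M \kappa_l = 1$. Each of the three parts then reduces to a direct generalization of the corresponding part of Proposition~\ref{prop:sum-of-p-matrices}. An equivalent route is induction on $M$: the two-layer statement is Proposition~\ref{prop:sum-of-p-matrices}, and the inductive step rewrites $G^\parallel_{M} = \tilde\kappa\, G^\parallel_{M-1} + (1-\tilde\kappa)\, G^{(M)}$ for appropriate $\tilde\kappa \in [0,1]$, so the two-layer argument applies at each step.

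For part (1), I would use the fact (cited in the proof of Proposition~\ref{prop:sum-of-p-matrices}) that a symmetric matrix is a P-matrix if and only if it is positive definite, together with the standard observation that any non-negative combination of positive definite matrices is positive definite. Hence $I+G^\parallel$ is symmetric positive definite, so a P-matrix, and Theorem~\ref{thm:single-uniqueness} gives uniqueness.

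For part (2), I would extend the triangle inequality step of the two-layer proof to obtain, for every row $i$,
\begin{equation*}
\sum_{j\neq i}\Bigl|\sum_{l=1}^M \kappa_l\, g^{(l)}_{ij}\Bigr|
\;\le\; \sum_{l=1}^M \kappa_l \sum_{j\neq i} |g^{(l)}_{ij}|
\;<\; \sum_{l=1}^M \kappa_l \;=\; 1,
\end{equation*}
so $I+G^\parallel$ is strictly row diagonally dominant. The Gershgorin argument of Proposition~\ref{prop:sum-of-p-matrices} then applies unchanged to $I+G^\parallel$ and its principal submatrices (which inherit strict row diagonal dominance), yielding $I+G^\parallel \in \mathcal{P}$.

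For part (3), I would exploit linearity to write, using $\sum_l \kappa_l = 1$,
\begin{equation*}
1 + \sum_j g^{\parallel}_{ij} \;=\; \sum_{l=1}^M \kappa_l\Bigl(1+\sum_j g^{(l)}_{ij}\Bigr),
\end{equation*}
and analogously $\frac{1}{N}\bigl(1+\sum_{j\neq k} g^{\parallel}_{ij}\bigr) = \sum_l \kappa_l \cdot \frac{1}{N}\bigl(1+\sum_{j\neq k} g^{(l)}_{ij}\bigr)$, while $g^{\parallel}_{ik} = \sum_l \kappa_l g^{(l)}_{ik}$. Positivity of the first quantity and strict dominance in the second follow termwise from the assumption that each $I+G^{(l)}$ is a B-matrix. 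Hence $I+G^\parallel$ is a B-matrix, which is a subclass of $\mathcal{P}$. The main (minor) obstacle across all three parts is just bookkeeping of the convex weights $\kappa_l$; no new conceptual ingredient beyond the two-layer case is required.
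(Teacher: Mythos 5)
Your proposal is correct and matches the paper's own treatment: the paper proves the two-layer case (Proposition~\ref{prop:sum-of-p-matrices}) via positive-definiteness closure, the triangle-inequality/Gershgorin argument, and the B-matrix check, and presents the $M$-layer corollary as the same argument applied to the convex combination $\sum_l \kappa_l G^{(l)}$, exactly as you do. Your explicit bookkeeping of the weights $\kappa_l$ (and the induction alternative) is just a spelled-out version of what the paper leaves implicit.
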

Similar to Proposition~\ref{prop:sum-of-p-matrices}, these statements note that the limits of spillovers on each layer will remain true when these multiple layers connect. For instance, in statement 2, as long as the externalities received from other agents are limited in all layers, they will also be limited when these layers connect. The same can be said for other statements. 

For multilayer networks, we build on Proposition~\ref{prop:twolayer-multilayer-general}: if we want to retain uniqueness guarantees after adding a layer $G^{(M+1)}$ to an existing multilayer network $G^\merge_{old}$, we have to ensure that $I+G^\merge_{new} = \begin{pmatrix}\label{layeraddition}
      I+G^{(M+1)} & \mathbf{G^{(M+1, i)}}\\
      \mathbf{G^{(i, M+1)}} & I+G^\merge_{old}
    \end{pmatrix}$ is a P-matrix. 
The following identifies conditions under which this may not happen. 
\begin{corollary}
    A multilayer network game created by adding a layer to an existing multilayer network game will not have a unique NE if:
    \begin{enumerate}
        \item $I+G^{(M+1)}\notin P$ (meaning our additional layer needs to have a unique NE separately); or
        \item the corresponding Schur complement, $I+G^\merge_{old} - \mathbf{G^{(i, M+1)}}(I+G^{(M+1)})^{-1}\mathbf{G^{(M+1, i)}}$, does not have a positive determinant. 
    \end{enumerate}
\end{corollary}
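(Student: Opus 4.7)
The plan is to extend the proof of Proposition~\ref{prop:twolayer-multilayer-general} to this setting by showing that under either condition, $I+G^\merge_{new}$ fails to be a P-matrix; the conclusion then follows immediately from Theorem~\ref{thm:single-uniqueness}, which tells us that the NE is not guaranteed to be unique whenever $I+G^\merge_{new}\notin\mathcal{P}$.

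For the first case, I would invoke the standard fact that every principal submatrix of a P-matrix is itself a P-matrix. Since $I+G^{(M+1)}$ appears as the diagonal block of $I+G^\merge_{new}$ corresponding to the newly added layer's nodes (possibly after a permutation similarity, which preserves both P-matrix membership and the set of principal minors), any principal minor of $I+G^{(M+1)}$ with non-positive determinant is also a principal minor of $I+G^\merge_{new}$. Hence $I+G^{(M+1)}\notin\mathcal{P}$ forces $I+G^\merge_{new}\notin\mathcal{P}$.

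For the second case, I would first note that we may assume the first case does not apply, i.e., $I+G^{(M+1)}\in\mathcal{P}$ (otherwise we are already done). In particular $\det(I+G^{(M+1)})>0$, so $I+G^{(M+1)}$ is nonsingular, and the block determinant (Schur complement) identity applies:
\begin{align*}
\det(I+G^\merge_{new}) &= \det(I+G^{(M+1)}) \\
&\quad \times \det\bigl(I+G^\merge_{old} - \mathbf{G^{(i,M+1)}}(I+G^{(M+1)})^{-1}\mathbf{G^{(M+1,i)}}\bigr).
\end{align*}
If the Schur complement has non-positive determinant, then $\det(I+G^\merge_{new})\leq 0$. Since the full determinant is itself a principal minor, $I+G^\merge_{new}$ cannot belong to $\mathcal{P}$, completing the argument.

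The main obstacle is purely notational rather than substantive: one must carefully track the block partition after the permutation that groups the new layer's indices together (so that the newly added layer occupies one diagonal block, and $G^\merge_{old}$ the other), and verify that the coupling blocks $\mathbf{G^{(i,M+1)}}$ and $\mathbf{G^{(M+1,i)}}$ align consistently with the definition of the supra-adjacency. Because permutation similarity preserves both the P-matrix property and the Schur complement formula, no additional difficulty arises, and the result follows as a direct generalization of the two-layer case in Proposition~\ref{prop:twolayer-multilayer-general}.
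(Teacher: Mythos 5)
Your proposal is correct and follows essentially the same route as the paper: case (1) via the fact that $I+G^{(M+1)}$ is a principal submatrix (block) of $I+G^{\merge}_{new}$ so P-matrix failure is inherited, and case (2) via the block/Schur-complement determinant identity forcing $\det(I+G^{\merge}_{new})\leq 0$, with Theorem~\ref{thm:single-uniqueness} giving the conclusion, exactly as in the paper's proof of Proposition~\ref{prop:twolayer-multilayer-general} which this corollary extends. Your explicit handling of the nonsingularity of $I+G^{(M+1)}$ (by reducing to the case where condition (1) fails) is a small but welcome extra care beyond what the paper spells out.
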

Again, similar to Proposition~\ref{prop:twolayer-multilayer-general}, we will not have a unique NE necessarily after adding a layer with an isolated unique NE. 

Furthermore, as we show below, for chain and cycle multilayer networks (illustrated in Figure \ref{fig:cycle-chain}), we can identify additional conditions where the uniqueness of the NE is undermined. 

\begin{figure}
    \centering
    \includegraphics[width=0.32\textwidth]{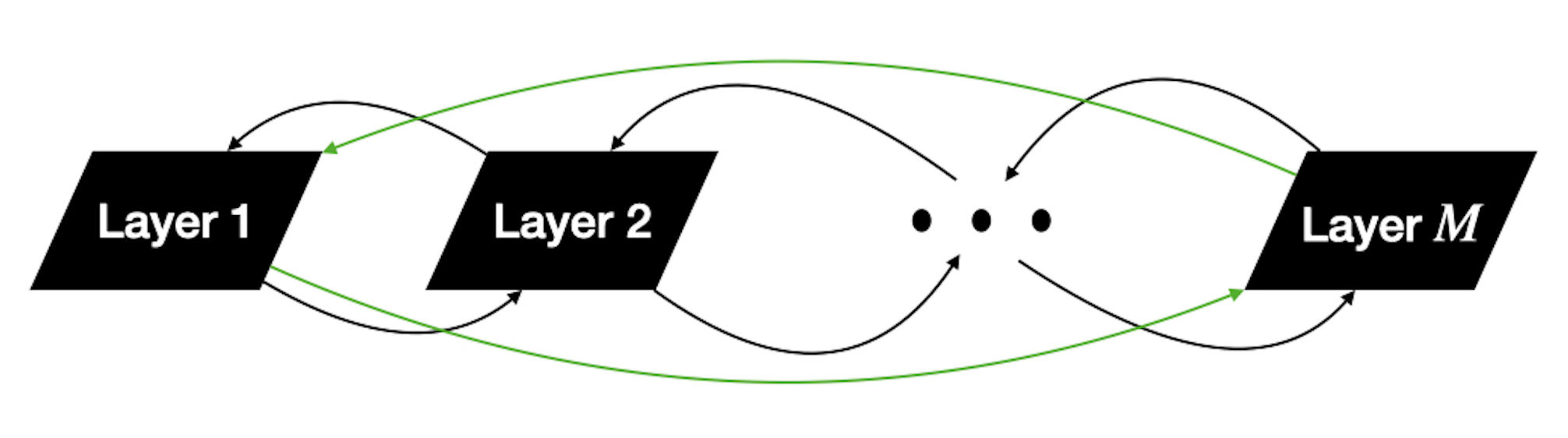}
    \caption{With (without) the green links, we will have a cycle (chain) of networks.}
    \label{fig:cycle-chain}
\end{figure}

\begin{proposition}
    Let $\mathcal{B}=\{B_1, B_2, ...\}$ denote the set of all non-diagonal blocks (matrices) in the supra-adjacency matrix. If any of the following cases exist in any of the matrices in $\mathcal{B}$, we will not have a unique NE:
    \begin{enumerate}
        \item A directed acyclic graph (DAG), i.e., a  directed graph with no directed cycles;
        \item An Idempotent matrix, i.e., a matrix which, when multiplied by itself, yields itself ($A^2=A$).
        \item A Circulant matrix with zero diagonal, i.e., a matrix where all row vectors are composed of the same elements, and each row vector is rotated one element to the right relative to the preceding row vector.
    \end{enumerate}
\end{proposition}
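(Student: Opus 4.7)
The plan is to leverage Theorem~\ref{thm:single-uniqueness} and establish, for each of the three structural cases, that $I+G^\merge$ fails the P-matrix property by exhibiting a principal minor with non-positive determinant. As in Proposition~\ref{prop:twolayer-multilayer-general}, I would look at principal minors that span two adjacent layers, and evaluate them via the block Schur complement formula so that the single special block $B \in \mathcal{B}$ enters the computation directly alongside its paired reverse-direction block and the intra-layer diagonal blocks. In each case, the special structure of $B$ will be what makes the corresponding Schur complement lose positivity; the role of the other blocks is only to supply compatible companion weights and benefit/cost parameters that realize a well-posed game instance.

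For case (1), if $B$ is the adjacency matrix of a DAG, then after topological reordering $B$ is strictly triangular and hence nilpotent, with $\det(B)=0$ and all eigenvalues equal to zero. Working in this ordering, I would pick a principal minor tied to a source (or sink) vertex of the DAG together with its counterpart in the adjacent layer, so that the triangular structure of $B$ collapses the Schur complement into a small block whose determinant can be explicitly read off and driven to $\leq 0$ by an appropriate choice of the paired reverse-direction block. For case (2), $B^2=B$ forces $B$ to be an oblique projection with eigenvalues in $\{0,1\}$, so $I-B$ is singular whenever $B\neq 0$; I would exploit this singularity by selecting a principal minor whose Schur complement is a polynomial expression in $B$ that inherits a zero eigenvalue, and then tilt the diagonal-block contribution so that the minor's determinant turns strictly non-positive. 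For case (3), a non-zero circulant $B$ with zero diagonal satisfies $\operatorname{tr}(B)=0$, so the sum of its eigenvalues is zero and at least one eigenvalue is strictly negative; diagonalizing $B$ in the Fourier basis and aligning the principal minor with the eigenvector associated to this negative eigenvalue yields the desired non-positive contribution, after which the conclusion follows from Theorem~\ref{thm:single-uniqueness}.

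The main obstacle is that the determinant of the chosen principal minor depends not only on $B$ but also on its paired reverse block and the intra-layer blocks, so for each case I must exhibit (or at least argue the existence of) companion blocks that interact with the special structure of $B$ to realize a non-positive minor, and that correspond to a well-posed game instance with valid benefit functions and unit costs. I expect the DAG case to be the most transparent, since topological reordering reduces the Schur-complement computation to triangular block algebra; the idempotent case should follow by aligning the paired block with the range of $B$; the circulant case will be the most delicate because the Fourier-basis alignment involves complex arithmetic that must be carried back to a real, realizable game instance and coordinated with the sign of $\det(I+G^{(l)})$ for the neighboring diagonal block.
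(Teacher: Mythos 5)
Your plan has a structural gap that keeps it from proving the stated claim. The proposition asserts that for the \emph{given} supra-adjacency matrix (the paper states it in the chain/cycle, i.e.\ block tridiagonal, setting introduced just before it), the presence of a single off-diagonal block of the listed types already destroys the uniqueness guarantee, \emph{whatever} the other blocks are. Your argument, by contrast, reserves the freedom to pick ``an appropriate choice of the paired reverse-direction block'' and to ``tilt the diagonal-block contribution''; that only establishes the existence of \emph{some} multilayer instance containing such a block without a unique NE, a strictly weaker statement. And the freedom is not dispensable in your local approach: a principal minor spanning two adjacent layers genuinely depends on the companion blocks — e.g.\ if the paired reverse block is $\mathbf{0}$, that minor's determinant is $\det(I+G^{(1)})\det(I+G^{(2)})>0$, so no two-layer principal minor need be non-positive even though the proposition still applies. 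The paper avoids this by arguing globally: for chain/cycle supra-adjacency matrices, Molinari's block-tridiagonal determinant formula \cite{MOLINARI20082221} exhibits $\det(B_1B_2\cdots B_M)$ as an explicit factor of $\det(I+G^\merge)$, so a \emph{singular} off-diagonal block forces $\det(I+G^\merge)=0$ independently of all remaining blocks, hence $I+G^\merge\notin\mathcal{P}$ and Theorem~\ref{thm:single-uniqueness} gives the conclusion.

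This also reveals the unifying idea you are missing: each of the three hypotheses is simply a certificate that the special block $B$ is singular, which is all the factorization needs — a DAG block is strictly triangular after topological reordering (determinant $0$), a non-identity idempotent matrix is singular, and for the zero-diagonal circulant the paper argues that one Fourier factor of the circulant determinant vanishes. Your per-case mechanics drift away from this target: in the idempotent case you exploit singularity of $I-B$, but the off-diagonal block enters $I+G^\merge$ as $B$ itself (only diagonal blocks receive the added identity), so that is not the relevant singularity; and in the circulant case the inference ``trace zero implies a strictly negative eigenvalue'' fails because these blocks are generally non-symmetric and have complex eigenvalues, so the Fourier/eigenvalue bookkeeping would have to be redone. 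In short, the proof needs (i) the global block-tridiagonal determinant identity in place of local two-layer Schur complements, and (ii) a singularity argument for $B$ itself in each of the three cases.
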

\begin{proof}
    For a \emph{block tridiagonal matrix} (cycle or chain), from \cite{MOLINARI20082221}, we know the determinant can be written as follows:
    \begin{align}
        &\det(I+G^\merge)=(-1)^{(M-1)n}\det(T-I_{2n})\det(B_1...B_M)\\
        &\det(I+G^\merge)=(-1)^{Mn}\det(T_0^{11})\det(B_1...B_M)
    \end{align}
Any of the statements above will make the last term equal to zero, and therefore, they will make the whole determinant zero, meaning that we will not have a unique NE anymore. 

(1): We will have an upper or lower triangular matrix with zero on the diagonal, which will give us a zero determinant.

(2): It is well known that the only non-singular Idempotent matrix is $I$.

(3) For a circulant matrix we can write its determinant as:
\begin{align}\label{eq:detcirculant}
    \det(C) = \prod_{j=0}^{n-1}(c_0+c_{n-1}\omega^j+...+c_1\omega^{(n-1)j})
\end{align}
Since in our case $c_0 = 0$, the first product in \eqref{eq:detcirculant} ($j=0$) will be zero, and therefore we will have a zero determinant. 
\end{proof}
\subsection{Existence of Nash Equilibria}\label{appendix:existence}

\subsubsection*{Existence on single-layer games}
The work of \cite{naghizadeh2017provision} identifies conditions for the existence of Nash equilibria in two particular classes of single-layer network games: games of strategic substitutes and games of strategic complements. Formally, in a game of strategic substitutes (complements), we have $g_{ij}\ge 0$ ($g_{ij}\le 0$), $\forall i, j\neq i$.
\begin{theorem}\cite[Theorem 2]{naghizadeh2017provision}\label{thm:strategicsubs}
    A single-layer network game of strategic substitutes (i.e., $g_{ij}\ge 0$, $\forall i, j\neq i$) always has at least one Nash equilibrium. 
\end{theorem}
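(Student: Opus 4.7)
Going via the LCP reformulation established in the preliminaries, the plan is to observe that the NE of the game coincide with the solutions of LCP$(I+G,-\mathbf{q})$, and then to invoke a copositivity-based existence result. Under strategic substitutes ($g_{ij}\geq 0$ for $i\neq j$), the matrix $I+G$ is strictly copositive on the nonnegative orthant, because for every $0\neq \mathbf{x}\geq 0$,
\begin{equation*}
\mathbf{x}^T(I+G)\mathbf{x} \;=\; \|\mathbf{x}\|_2^2 \,+\, \sum_{i\neq j} g_{ij}x_ix_j \;\geq\; \|\mathbf{x}\|_2^2 \;>\; 0.
\end{equation*}
A classical existence theorem for linear complementarity problems (e.g., Cottle--Pang--Stone) then guarantees that an LCP whose defining matrix is strictly copositive is solvable for \emph{every} right-hand side, which immediately produces an NE of the game regardless of the choice of benefit functions and unit costs.

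Alternatively, and more elementarily, I would set $B_i(\mathbf{x}):=\max\{0,q_i-\sum_j g_{ij}x_j\}$ and note that by \eqref{eq:linear-br} the NE are precisely the fixed points of $B$ on $\mathbb{R}_{\geq 0}^N$. The map $B$ is continuous, being the composition of a linear map with the continuous function $\max\{0,\cdot\}$. Crucially, the substitutability hypothesis $g_{ij}\geq 0$ bounds each coordinate $B_i(\mathbf{x})$ above by $\max\{0,q_i\}$ whenever $\mathbf{x}\geq 0$, so $B$ maps the compact convex box $K:=\prod_{i=1}^{N}[0,\max\{0,q_i\}]$ into itself, and Brouwer's fixed-point theorem delivers a fixed point in $K$.

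I do not expect a serious obstacle. The key role of the strategic-substitutes condition is precisely to yield the a priori upper bound that makes $K$ invariant (equivalently, the nonnegative cross terms needed for strict copositivity in the LCP route); without it, large negative spillovers could push $q_i-\sum_j g_{ij}x_j$ arbitrarily high, no compact invariant box would exist, and neither argument would go through. The only edge case worth flagging is $q_i\leq 0$ for all $i$, where $K$ collapses to $\{\mathbf{0}\}$ and the zero vector is trivially an equilibrium, so the argument covers all parameter regimes uniformly.
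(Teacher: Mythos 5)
Your proposal is correct, and in fact both of your routes work. Note that the paper itself does not prove this statement: it imports it verbatim as Theorem~2 of \cite{naghizadeh2017provision}, whose argument runs exactly through the LCP reformulation you describe — under strategic substitutes the matrix $I+G$ has nonnegative off-diagonal entries and unit diagonal, hence is strictly copositive on the nonnegative orthant, and strictly copositive matrices are $Q$-matrices (LCP$(M,\mathbf{b})$ solvable for every right-hand side), so LCP$(I+G,-\mathbf{q})$ always has a solution regardless of the benefit functions and costs. So your first route is essentially the proof of record and is fully consistent with the LCP machinery set up in Section~\ref{sec:single-layer-prelim}. Your second, Brouwer-based argument is a genuinely more elementary alternative: since $g_{ij}\ge 0$ and $g_{ii}=0$, the unrestricted best-response map $B_i(\mathbf{x})=\max\{0,q_i-\sum_j g_{ij}x_j\}$ sends the compact convex box $\prod_i[0,\max\{0,q_i\}]$ into itself and is continuous, so a fixed point — which is an NE of the full game because $B$ is the best response over all of $\mathbb{R}_{\ge 0}$, not just the box — exists. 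What the LCP route buys is uniformity with the uniqueness/stability analysis elsewhere in the paper (P-matrix, Schur complement, and perturbation arguments all live in the same LCP language); what the Brouwer route buys is self-containedness and a transparent picture of where substitutability is used, namely to furnish the a priori bound that makes the box invariant, exactly as you observe.
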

\begin{theorem}\cite[Theorem 3]{naghizadeh2017provision}\label{thm:strategiccomp}
    A single-layer network game of strategic complements (i.e., $g_{ij}\le 0$, $\forall i, j\neq i$) has a (unique) Nash equilibrium if and only if $\rho(G) < 1$.
\end{theorem}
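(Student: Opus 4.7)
The plan is to leverage the LCP equivalence from Section~\ref{sec:single-layer-prelim}: finding a Nash equilibrium of the single-layer game is equivalent to solving LCP$(I+G, -\mathbf{q})$, and by Theorem~\ref{thm:single-uniqueness} the NE is unique for every $\mathbf{q}$ exactly when $I+G \in \mathcal{P}$. So I would reduce the task to showing, under the strategic-complements assumption $g_{ij} \le 0$ for $i \ne j$ (and $g_{ii}=0$), that the combination ``$I+G \in \mathcal{P}$ plus existence of a solution to LCP$(I+G,-\mathbf{q})$ for all $\mathbf{q}$'' is equivalent to $\rho(G) < 1$.

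First I would observe that $I+G$ is a \emph{Z-matrix}: its off-diagonal entries are $g_{ij} \le 0$ and its diagonal entries are all equal to $1>0$. A classical result in matrix theory (e.g., Berman--Plemmons) says that a Z-matrix is a P-matrix if and only if it is a non-singular M-matrix. Writing $I+G = I - (-G)$ with $-G$ entrywise non-negative, another classical characterization gives that $I-(-G)$ is a non-singular M-matrix iff $\rho(-G)<1$; and $\rho(-G)=\rho(G)$ since the eigenvalues of $-G$ are the negatives of those of $G$. Chaining these equivalences yields $I+G \in \mathcal{P} \iff \rho(G)<1$, which combined with Theorem~\ref{thm:single-uniqueness} delivers the uniqueness direction.

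For the existence side when $\rho(G)<1$, the M-matrix property gives $(I+G)^{-1} \ge 0$ entrywise. I would then analyze the best-response map $T(\mathbf{x})_i = \max\{0,\, q_i - (G\mathbf{x})_i\}$, which is monotone under complements since $-G \ge 0$. Iterating from $\mathbf{x}^{(0)}=\mathbf{0}$ produces a non-decreasing sequence; monotonicity together with the bound $\mathbf{x}^{(t)} \le (I+G)^{-1}\mathbf{q}^+$ (derived by comparing with the unconstrained linear recursion, which converges because $\rho(G)<1$) ensures convergence to a fixed point which, by construction, solves the LCP and hence is the NE. For the converse, I would show that $\rho(G)\ge 1$ forces non-existence on some instance: by Perron--Frobenius applied to the non-negative matrix $-G$, there is an eigenvector $\mathbf{v}\ge \mathbf{0}$ with eigenvalue $\rho(G)\ge 1$; selecting $\mathbf{q}$ aligned with (a strictly positive perturbation of) $\mathbf{v}$ makes the monotone iteration from $\mathbf{0}$ dominate a divergent linear recursion, so no finite fixed point of $T$ can exist.

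The main obstacle I anticipate is cleanly handling the existence/non-existence dichotomy in the $\rho(G)\ge 1$ regime, since a monotone best-response operator may in principle produce either no fixed point or multiple fixed points, and Tarski's theorem does not directly apply because $\mathbb{R}_{\ge 0}^N$ is not a complete lattice. This is precisely where coupling the LCP--P-matrix viewpoint (Theorem~\ref{thm:single-uniqueness}) with the Perron--Frobenius/M-matrix machinery is essential: together they rule out the ``unique NE'' scenario when $\rho(G)\ge 1$, and supply an explicit $\mathbf{q}$ producing divergence that proves non-existence for at least one instance, yielding the stated if-and-only-if.
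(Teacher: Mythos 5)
This statement is imported by the paper as a known result (\cite[Theorem 3]{naghizadeh2017provision}) and is not proved here, so there is no in-paper proof to match against; judged on its own, your argument is correct and follows what is essentially the canonical route for this theorem. The core step — $I+G$ has non-positive off-diagonal entries and unit diagonal, hence is a Z-matrix, and a Z-matrix is a P-matrix iff it is a non-singular M-matrix, which for $I-(-G)$ with $-G\ge 0$ holds iff $\rho(-G)=\rho(G)<1$ — is exactly the right bridge between Theorem~\ref{thm:single-uniqueness} and the spectral-radius condition, and your existence/non-existence analysis (monotone best-response iteration bounded by $(I+G)^{-1}\mathbf{q}^+$ when $\rho(G)<1$; Perron--Frobenius giving a divergent minorizing linear recursion when $\rho(G)\ge 1$) is sound. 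Two small remarks. First, once you have $I+G\in\mathcal{P}\iff\rho(G)<1$, Theorem~\ref{thm:single-uniqueness}, read as the paper reads it (a guarantee over all benefit/cost instances, i.e., over all $\mathbf{q}$, via the LCP characterization), already delivers existence and uniqueness of the solution of LCP$(I+G,-\mathbf{q})$ for every $\mathbf{q}$; your fixed-point iteration is therefore a constructive bonus rather than a needed step. Second, in the $\rho(G)\ge 1$ direction, make explicit the closing observation that any fixed point of $T$ would upper-bound the monotone iterates from $\mathbf{0}$ (so unboundedness of the iterates, witnessed by pairing with a Perron eigenvector of $-G$, indeed precludes a Nash equilibrium), and note that this non-existence conclusion requires an instance with $q_i>0$ (or the ``for some benefit functions and unit costs'' reading the paper uses elsewhere, e.g., in the footnote to Proposition~\ref{prop:pair-failure}); with $\mathbf{q}\le\mathbf{0}$ a unique equilibrium at $\mathbf{0}$ can survive $\rho(G)\ge 1$, so the quantification matters.
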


\subsubsection*{Existence on multiplex networks}

Building on the existing Theorems \ref{thm:strategicsubs} and \ref{thm:strategiccomp}, we explore conditions for the existence of Nash equilibria in multiplex network games of strategic substitutes and complements. 

\paragraph{Multiplex games of strategic substitutes} For multiplex network games of strategic substitutes (i.e., $g^{\parallel}_{ij}\geq 0, \forall i, j$), by Theorem \ref{thm:strategicsubs}, at least one Nash equilibrium always exists. These games can emerge if both layers are games of strategic substitutes (i.e., $g_{ij}^{(l)}\ge 0, \forall i,j\neq i$, and for $l\in\{1,2\}$). They can also emerge when the first layer is game of strategic substitutes (i.e., $g^{(1)}_{ij}\ge 0, \forall i, j, l$), and $\kappa\times\min_{ij}(g^{(1)}_{ij})\ge (1-\kappa)\times(|g^{(2)}_{ij}|)\;,\;\forall i,j$. 

\paragraph{Multiplex games of strategic complements} We next consider multiplex games of strategic complements (i.e., $g^{\parallel}_{ij}\leq 0, \forall i, j$). These can emerge when both layers are games of strategic complements (i.e., $g^{(l)}_{ij}\leq 0, \forall i, j, l$), as well as under mixed substitute and complement layers for which all weighted sum of edge weights remain negative. In this case, by Theorem \ref{thm:strategiccomp}, a Nash equilibrium exists in the multiplex game if and only if $\rho(G^\parallel)<1$. For general matrices $G^{(l)}$, there is no full characterization of the spectral radius of their sum $G^\parallel$. However, we can identify (sufficient) conditions for this bound to be satisfied in the case of \emph{{undirected}} networks, by leveraging Weyl's inequalities. 

\begin{proposition}\label{prop:multiplex-comp-exist}
    In a multiplex network game of complements {with undirected layers}, if 
    \begin{align*}
        |\kappa\lambda_{\min}(G^{(1)})+(1-\kappa)\lambda_{\min}(G^{(2)})|<1
    \end{align*}
at least one Nash equilibrium exists. 
\end{proposition}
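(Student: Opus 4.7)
\textbf{Proof proposal for Proposition~\ref{prop:multiplex-comp-exist}.} The plan is to reduce the existence question to a bound on the spectral radius of $G^{\parallel}$, and then bound this spectral radius from above by the quantity appearing in the hypothesis, using Weyl's inequality (which is already a tool employed in the paper). By Theorem~\ref{thm:strategiccomp} applied to the multiplex game viewed as a single-layer game over $G^{\parallel}$, it suffices to show that $\rho(G^{\parallel})<1$ whenever $|\kappa\lambda_{\min}(G^{(1)})+(1-\kappa)\lambda_{\min}(G^{(2)})|<1$.

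The first step is to rewrite the spectral radius in terms of the smallest eigenvalue. Because the layers are undirected, $G^{(1)}$ and $G^{(2)}$ are symmetric, hence so is $G^{\parallel}$, and all its eigenvalues are real. Since the game is one of strategic complements we have $g^{\parallel}_{ij}\le 0$ for $i\ne j$, and because self-loops are excluded the diagonal is zero. Thus $-G^{\parallel}$ is a symmetric entrywise nonnegative matrix, and by Perron--Frobenius its spectral radius equals its largest eigenvalue, giving
\begin{equation*}
\rho(G^{\parallel})=\rho(-G^{\parallel})=\lambda_{\max}(-G^{\parallel})=-\lambda_{\min}(G^{\parallel})=|\lambda_{\min}(G^{\parallel})|.
\end{equation*}
The same argument applied layer-by-layer shows $\lambda_{\min}(G^{(l)})\le 0$, and in fact $\lambda_{\min}(G^{(l)})<0$ whenever $G^{(l)}\neq 0$ since $\mathrm{tr}(G^{(l)})=0$; in particular the quantity inside the absolute value in the hypothesis is nonpositive.

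The second step is to apply Weyl's inequality for symmetric matrices (already invoked in the proof of Proposition~\ref{prop:lambda-min-fails}) with $i=n$, $j=k=n$, yielding
\begin{equation*}
\lambda_{\min}(G^{\parallel})\ \ge\ \kappa\,\lambda_{\min}(G^{(1)})+(1-\kappa)\,\lambda_{\min}(G^{(2)}).
\end{equation*}
Since both sides are nonpositive, taking absolute values reverses the inequality, giving $|\lambda_{\min}(G^{\parallel})|\le |\kappa\lambda_{\min}(G^{(1)})+(1-\kappa)\lambda_{\min}(G^{(2)})|$. Combining with the first step, $\rho(G^{\parallel})\le |\kappa\lambda_{\min}(G^{(1)})+(1-\kappa)\lambda_{\min}(G^{(2)})|<1$, and Theorem~\ref{thm:strategiccomp} delivers the claimed existence of a Nash equilibrium.

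The main subtlety (rather than a true obstacle) is the sign bookkeeping in the last step: one must verify that $G^{\parallel}$ really has the ``non-positive off-diagonal, zero diagonal'' structure needed to identify $\rho(G^{\parallel})$ with $|\lambda_{\min}(G^{\parallel})|$, which follows directly from the standing assumption that the multiplex is a game of complements. Beyond this, the proof is a direct combination of Theorem~\ref{thm:strategiccomp}, Perron--Frobenius, and Weyl's inequality; no new matrix-theoretic machinery is needed.
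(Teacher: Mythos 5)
Your proof is correct and follows essentially the same route as the paper's: reduce existence to $\rho(G^{\parallel})<1$ via Theorem~\ref{thm:strategiccomp}, identify $\rho(G^{\parallel})$ with $|\lambda_{\min}(G^{\parallel})|$ using symmetry, the complements structure, and Perron--Frobenius, and bound $|\lambda_{\min}(G^{\parallel})|$ by Weyl's inequality. Your only addition is spelling out the sign bookkeeping (nonpositivity of the smallest eigenvalues due to the zero trace) that the paper's statement of the Weyl step leaves implicit, which is a fair clarification rather than a different argument.
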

\begin{proof}
The spectral radius of a matrix $A$ is defined as $\rho(A) = \max_i|\lambda_i(A)|$. For symmetric matrices, all eigenvalues are real, and we have $\rho(A) = \max\{-\lambda_{\min}(A), \lambda_{\max}(A)\}$, and by using Perron-Frobenius Theorem we know for a game of strategic complements $\rho(G)<1$ if and only if $|\lambda_{\min}(G)|<1$. 

Therefore, for $G^\parallel = \kappa G^{(1)}+(1-\kappa)G^{(2)}$, using Weyl's inequalities, we know:
    \begin{align}
        |\lambda_{\min}(G^\parallel)|&\le |\kappa\lambda_{\min}(G^{(1)})+(1-\kappa)\lambda_{\min}(G^{(2)})|\label{weylmin}
    \end{align}
Note that since $G^\parallel$ has zero on its diagonal entries, its smallest eigenvalue $\lambda_{\min}$ is negative. This means that if \eqref{weylmin} holds, then $\rho(G^\parallel)<1$, and by Theorem~\ref{thm:strategiccomp}, a (unique) NE to the multiplex network game will exist. 
\end{proof}

\emph{Intuitive interpretation.}  We note that the lowest eigenvalue represents the ``two-sidedness'' of a network. This means that if one or both constituent layers are two-sided (their $\lambda_{\min}$ have large magnitude), the condition in Proposition \ref{prop:multiplex-comp-exist} will be hard to satisfy; i.e., it will be harder to guarantee that the multiplex will have an equilibrium.  

Additionally, we identify conditions under which an NE does not exist in multiplex games of complements. 
\begin{proposition}\label{prop:multiplex-comp-not-exist}
    If any of the following conditions holds in a multiplex network game of complements {with undirected layers}, the game will not have a Nash equilibrium:
\begin{enumerate}
    \item $|\kappa\lambda_{\max}(G^{(1)})+(1-\kappa)\lambda_{\min}(G^{(2)})|\geq 1$,
    \item $|\kappa\lambda_{\min}(G^{(1)})+(1-\kappa)\lambda_{\max}(G^{(2)})|\geq 1$.
\end{enumerate}
\end{proposition}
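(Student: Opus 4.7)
\medskip

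The plan is to mirror the proof of Proposition~\ref{prop:multiplex-comp-exist}: invoke Theorem~\ref{thm:strategiccomp} (so that no NE exists whenever $\rho(G^\parallel)\geq 1$) and then use Weyl's inequalities to show that each of the two stated conditions forces the spectral radius of the aggregated adjacency matrix to reach or exceed $1$. For a game of strategic complements, $G^\parallel$ is symmetric with non-positive off-diagonal entries and zero diagonal, so $-G^\parallel$ is entry-wise non-negative; by Perron--Frobenius applied to $-G^\parallel$, the spectral radius satisfies $\rho(G^\parallel)=\lambda_{\max}(-G^\parallel)=-\lambda_{\min}(G^\parallel)=|\lambda_{\min}(G^\parallel)|$, and consequently $|\lambda_{\min}(G^\parallel)|\geq |\lambda_i(G^\parallel)|$ for every other eigenvalue, in particular $|\lambda_{\min}(G^\parallel)|\geq \lambda_{\max}(G^\parallel)$. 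This observation is the key ingredient that lets the same statement cover both signs of the bracketed quantity.

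Next I would apply Weyl's inequalities to $G^\parallel=\kappa G^{(1)}+(1-\kappa)G^{(2)}$. Choosing the index pair $(r,s)=(1,n)$ in the upper bound for $\lambda_n(G^\parallel)$ yields
\begin{align*}
\lambda_{\min}(G^\parallel)\leq \kappa\lambda_{\max}(G^{(1)})+(1-\kappa)\lambda_{\min}(G^{(2)})~,
\end{align*}
while choosing $(j,k)=(1,n)$ in the lower bound for $\lambda_1(G^\parallel)$ yields
\begin{align*}
\lambda_{\max}(G^\parallel)\geq \kappa\lambda_{\max}(G^{(1)})+(1-\kappa)\lambda_{\min}(G^{(2)})~.
\end{align*}
Now split into two cases. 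If $\kappa\lambda_{\max}(G^{(1)})+(1-\kappa)\lambda_{\min}(G^{(2)})\leq -1$, the first inequality gives $\lambda_{\min}(G^\parallel)\leq -1$, hence $\rho(G^\parallel)=|\lambda_{\min}(G^\parallel)|\geq 1$. If instead this sum is $\geq 1$, the second inequality gives $\lambda_{\max}(G^\parallel)\geq 1$, and combining with the Perron--Frobenius observation above yields $\rho(G^\parallel)=|\lambda_{\min}(G^\parallel)|\geq \lambda_{\max}(G^\parallel)\geq 1$. In either case the hypothesis of Theorem~\ref{thm:strategiccomp} is violated, so no Nash equilibrium exists. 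This settles condition (1); condition (2) is handled identically by instead using the $(r,s)=(n,1)$ and $(j,k)=(n,1)$ choices in Weyl's inequalities.

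The main subtlety, and what I expect to be the main obstacle to write cleanly, is precisely the handling of both signs of the bracketed quantity in a single statement. Unlike Proposition~\ref{prop:multiplex-comp-exist}, where the expression inside the absolute value is intrinsically non-positive (as a weighted sum of two non-positive quantities), here the sum $\kappa\lambda_{\max}(G^{(1)})+(1-\kappa)\lambda_{\min}(G^{(2)})$ mixes a non-negative and a non-positive term and can have either sign. The Perron--Frobenius-based inequality $|\lambda_{\min}(G^\parallel)|\geq \lambda_{\max}(G^\parallel)$ for symmetric complement matrices is therefore essential: it ``folds'' the positive-side case back into a lower bound on the spectral radius. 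Once this bridge is in place, the rest of the argument is a direct invocation of Weyl's inequalities and Theorem~\ref{thm:strategiccomp}.
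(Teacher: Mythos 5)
Your proof is correct and follows essentially the same route the paper intends: Weyl's inequalities give $\lambda_{\min}(G^\parallel)\le \kappa\lambda_{\max}(G^{(1)})+(1-\kappa)\lambda_{\min}(G^{(2)})$ and $\lambda_{\max}(G^\parallel)\ge \kappa\lambda_{\max}(G^{(1)})+(1-\kappa)\lambda_{\min}(G^{(2)})$ (symmetrically for condition (2)), so either sign of the bracketed sum forces $\rho(G^\parallel)\ge 1$ and Theorem~\ref{thm:strategiccomp} then rules out a Nash equilibrium. Your Perron--Frobenius observation that $\rho(G^\parallel)=|\lambda_{\min}(G^\parallel)|$ for a symmetric game of complements is the same device the paper already uses in the proof of Proposition~\ref{prop:multiplex-comp-exist}, so your argument coincides with the paper's ``lower bounds on $\lambda_{\max}(G^\parallel)$ and on $|\lambda_{\min}(G^\parallel)|$ via Weyl'' sketch.
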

The proof is straightforward and is based on finding lower bounds on the largest eigenvalue of $G^\parallel$ and lower bounds on the absolute value of its smallest eigenvalue using Weyl's inequalities. 

\emph{Intuitive interpretation.} For a case where we have two layers of strategic complements with a unique Nash equilibrium, i.e., $\rho(G^{(l)})=|\lambda_{\min}(G^{(l)})|<1$, we can say $|\kappa\lambda_{\min}(G^{(1)})+(1-\kappa)\lambda_{\min}(G^{(2)})| < 1$ which means neither of the conditions in Proposition~\ref{prop:multiplex-comp-not-exist} hold, since $|\lambda_{\min}(G^{(l)})|\ge\lambda_{\max}(G^{(l)})$ for a strategic game of complements.

Generally we can also say if $\rho(G^{(l)})>1$ for either layer, then there is an increased likelihood of not having a Nash equilibrium as per Proposition~\ref{prop:multiplex-comp-not-exist}, and it is also more challenging to satisfy the conditions for the existence of a Nash equilibrium as per Proposition~\ref{prop:multiplex-comp-exist}.

\subsubsection*{Existence on multilayer networks}

\paragraph{Multilayer games of strategic substitutes} We know from Theorem~\ref{thm:strategicsubs} that a  multilayer network game of strategic substitutes (i.e., $g_{ij}^\merge\geq 0, \forall i, j$) will have at least one Nash equilibrium. Such network can be constructed, e.g., if both layers are also games of strategic substitutes, \emph{and} the inter-layer connections are also non-negative, meaning $g_{ij}^{(pq)}\ge 0$.

\paragraph{Multilayer games of strategic complements} For a multilayer game of complements, we know from Theorem~\ref{thm:strategiccomp} that a Nash equilibrium exists if and only if $\rho(G^\merge) < 1$. The following lemma identifies a conditions when this happens. 

\begin{lemma}
    Consider a multiplex network game of complements with one-way inter-layer connection (i.e., with links directed only from one layer to the other, so that either $G^{(12)}=\textbf{0}$ or $G^{(21)}=\textbf{0}$). This game will have a (unique) Nash equilibrium if and only if $\rho(G^{(l)})<1, l\in\{1, 2\}$.
\end{lemma}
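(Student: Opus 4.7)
The plan is to leverage the block-triangular structure induced by the one-way inter-layer assumption, together with Theorem~\ref{thm:strategiccomp} from the excerpt, which characterizes equilibrium existence in games of complements via the spectral radius condition $\rho(G) < 1$. Since the game in the hypothesis is stated to be a game of complements, the off-diagonal entries of $G^\merge$ are non-positive, so Theorem~\ref{thm:strategiccomp} applies to $G^\merge$, and it suffices to show that $\rho(G^\merge) < 1$ if and only if $\rho(G^{(l)}) < 1$ for both $l \in \{1,2\}$.

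First, I would invoke the one-way assumption: without loss of generality assume $G^{(21)} = \mathbf{0}$ (the argument is symmetric if instead $G^{(12)} = \mathbf{0}$). Then the supra-adjacency matrix takes the block upper-triangular form
\begin{align*}
G^{\merge} = \begin{pmatrix} G^{(1)} & G^{(12)} \\ \mathbf{0} & G^{(2)} \end{pmatrix}.
\end{align*}
The characteristic polynomial of a block triangular matrix factors as the product of the characteristic polynomials of its diagonal blocks, so $\det(\lambda I - G^\merge) = \det(\lambda I - G^{(1)}) \det(\lambda I - G^{(2)})$. Hence the spectrum of $G^\merge$ is exactly the union (with multiplicity) of the spectra of $G^{(1)}$ and $G^{(2)}$, which immediately gives $\rho(G^\merge) = \max\{\rho(G^{(1)}), \rho(G^{(2)})\}$.

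From this spectral identity, the equivalence $\rho(G^\merge) < 1 \iff \rho(G^{(l)}) < 1 \text{ for } l\in\{1,2\}$ is immediate. Applying Theorem~\ref{thm:strategiccomp} to $G^\merge$ yields existence (and uniqueness) of the Nash equilibrium of the multilayer game precisely under this condition, which concludes the proof.

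I do not anticipate a serious technical obstacle here, since the main work is purely algebraic (the block-triangular factorization of the characteristic polynomial). The only subtlety worth double-checking is that Theorem~\ref{thm:strategiccomp} is indeed applicable to $G^\merge$: this requires $G^\merge$ to have non-positive off-diagonal entries, which follows from the stated assumption that the overall game is of complements (and in particular forces $G^{(12)}$ to have non-positive entries as well). One might also note that although the statement was phrased as an ``if and only if'' for \emph{uniqueness}, the equivalence actually gives both existence and uniqueness jointly, since Theorem~\ref{thm:strategiccomp} packages them together for games of complements.
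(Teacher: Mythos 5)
Your proposal is correct and follows essentially the same route as the paper: exploit the block-triangular form of the supra-adjacency matrix so that its spectrum is the union of the layers' spectra, conclude $\rho(G^\merge)=\max\{\rho(G^{(1)}),\rho(G^{(2)})\}$, and then invoke Theorem~\ref{thm:strategiccomp}. Your write-up is merely a bit more explicit than the paper's (the WLOG choice of which inter-layer block vanishes, the characteristic-polynomial factorization, and the check that $G^\merge$ has non-positive off-diagonal entries so the theorem applies), which is fine.
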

\begin{proof}
     In this scenario, the supra-adjacency matrix $G^\merge$ is a \emph{block triangular matrix}. Consequently, the eigenvalues of $I+G^\merge$, will be a union of the eigenvalues from individual layers $I+G^{(l)}$ for $l\in\{1,2\}$. Therefore, the multilayer network game's spectral radius ($\rho(G^\merge)$) will be less than 1 if and only if all constituent layers have a spectral radius ($\rho(G^{(l)})$) less than 1. According to Theorem~\ref{thm:strategiccomp}, this ensures the existence of a unique Nash equilibrium.
\end{proof}

\subsection{Experiments Based on Real-World Data}\label{app:real-world-experiments}

We conduct further experiments based on data from {two studies: (i) the Copenhagen Networks Study \cite{Sapiezynski2019}, and (ii) Friendfeed and Twitter social network interactions \cite{Magnani2011MLmodel}}. The former is a multiplex network, which connects a population of more than 700 university students over a period of four weeks. The layers are a ``calls'' network, an ``SMS'' network, and a ``Facebook friendship'' network. The first two layers are weighted directed networks, and the last one is a binary (unweighted and undirected) network. {The latter is a multiplex network connecting 2015 users on two social networks: Friendfeed and Twitter. Both layers are binary and undirected.}

\subsubsection{Copenhagen Study Data}
The data consists of 924 calls between 536 nodes for the ``calls'' network, and 1303 texts between 568 nodes in the ``SMS network''. We define the weights in the ``calls'' network as the duration of the call from $i$ to $j$ over the maximum duration of calls, and the weights in the ``SMS'' network as the number of times $i$ has sent a text to $j$ over the maximum number of texts in the network. We normalize both weights to be between 0 and 1. We then merged these ``SMS'' and ``calls'' layers to create a single \emph{in-person} network layer, $G_{in-person}$, as we hypothesize that the interactions in these networks are, most probably, reflective of overlapping in-person interactions. We also put higher weight on the ``calls'' network weights in this merging (as calls may reflect more influential connections than texts).

We then select {a random starting node and add the connected nodes to it until a community of $N=30$ nodes is reached (the algorithm resets at a new starting node to add if the resulting cluster is smaller than 30 nodes). We did this for 600 different cases, with 200 communities chosen from each of the calls, SMS, and Facebook friendship networks.} We assess when $I+G^\parallel_{total} = I + \kappa_1 G_{in-person} + (1-\kappa_1) G_{fb}$, the adjacency matrix of the multiplex network, has a positive determinant (this is a necessary condition to guarantee that the multiplex will have a unique, and hence stable, equilibrium), as {$\kappa_1$} (the influence of the in-person connections relative to the online connections) increases. We start with a powerful ``Facebook friendship'' layer ($\kappa_1$ close to $\frac{1}{5}$) and gradually move to a more powerful ``in-person layer'' ($\kappa_1$ close to 1). 

From Figure~\ref{fig:determinant_30}, we observe that negative determinants (and hence lack of guarantees for uniqueness and stability of the equilibria) can emerge when the ``Facebook friendship'' layer is more powerful than the ``in-person'' layer. On the other hand, as the influence of the ``in-person'' layer increases, we see that it has the potential to make up for the fluctuations and non-uniqueness caused by the ``Facebook friendship'' layer. 

\begin{figure}[h]
    \centering
    \vspace{-0.1in}
    \begin{subfigure}[c]{0.48\columnwidth}
        \includegraphics[width=\textwidth]{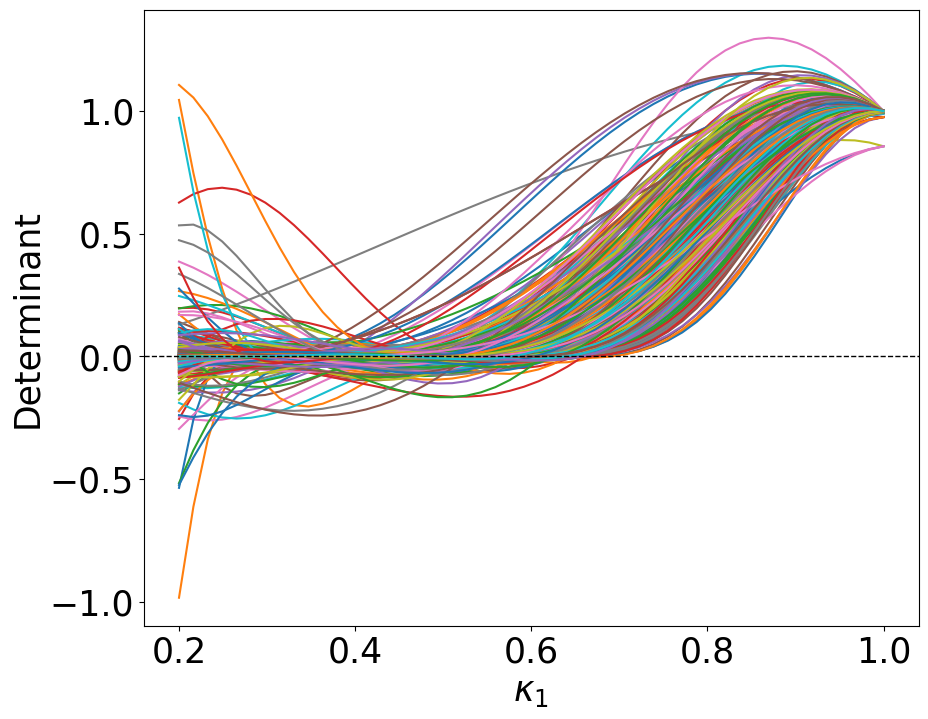}
        \caption{}
        \label{fig:determinant_30}
    \end{subfigure}
    \begin{subfigure}[c]{0.48\columnwidth}
        \includegraphics[width=\textwidth]{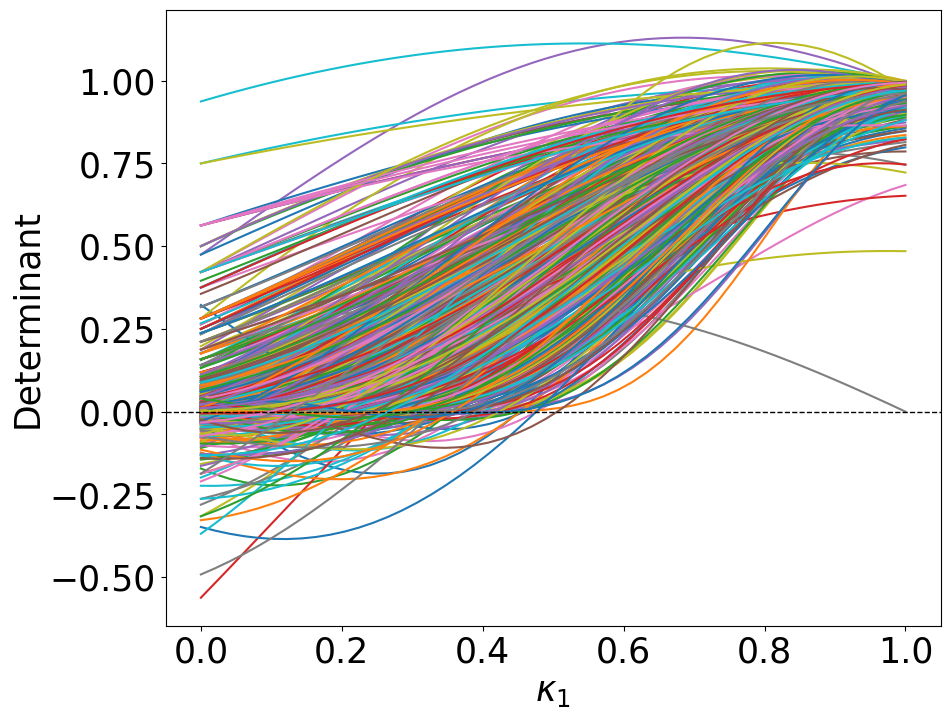}
        \caption{}
        \label{fig:determinant_NewData}
    \end{subfigure}
    \vspace{-0.1in}
    \caption{Determinant of 600 multiplex sub-networks, extracted from the Copenhagen Networks Data with $N=30$ (left) and from the Friendfeed-Twitter Data with $N=50$ (right), as $\kappa_1$ (the influence of in-person interactions) increases.}
    \vspace{-0.1in}
    \label{fig:prop4-plots}
\end{figure}

\subsubsection{Friendfeed and Twitter Data}
{This data consists of 2015 nodes with accounts in both social networks, with 17213 edges in the Friendfeed layer and 20191 edges in Twitter layer. We set the edge weights to reflect the users ``attention'' to each connection. For the Friendfeed layer, we assume each node uniformly divides their attention to all their outgoing links, i.e. $attention = \frac{1}{degree}$ (this choice will make the layer have a diagonally dominant adjacency matrix, and hence, a unique NE). For Twitter, we assume agents pay equal attention to all their connections, i.e. $attention = c$ where $c$ is a constant (so that this layer will not have a unique NE). We chose sub-networks similar to the previous experiment, but this time with $N=50$ for 600 different cases, with 300 of the communities were chosen from each of the Friendfeed and Twitter layers. 

As seen in Figure~\ref{fig:determinant_NewData}, a similar pattern to Figure~\ref{fig:determinant_30} emerges as the influence of the layer with a unique NE increases, implying that by having a multiplex with two layers where one has a unique NE while the other does not, one can construct a multiplex with a unique (and stable) NE by ensuring that the layer that has a unique NE is favored by users (has higher $\kappa$).}

\end{document}